\newcommand{\e}{\epsilon}
\newcommand{\vx}{\mathbf{x}}
\newcommand{\vy}{\mathbf{y}}
\newtheorem{lem}{Lemma}
\newtheorem{thm}{Theorem}
\newtheorem{defin}{Definition}
\theoremstyle{definition}
\newcommand{\suchthat}{\;\ifnum\currentgrouptype=16 \middle\fi|\;}
\begin{document}
%
% paper title
% can use linebreaks \\ within to get better formatting as desired
\title{On the Convergence Speed of Spatially Coupled LDPC Ensembles}
%\author{\authorblockN{Vahid Aref}
%\authorblockA{IPG, EPFL, Lausanne, Switzerland\\
%E-mail:\texttt{\{vahid.aref\}@epfl.ch}}
%\and
%\authorblockN{Laurent Schmalen and Stephan ten Brink}
%  \authorblockA{Bell Laboratories, Alcatel-Lucent, Stuttgart, Germany\\
%E-mail: \texttt{\{Laurent.Schmalen,Stephan.tenBrink\}@alcatel-lucent.com}}
%}

% author names and affiliations
% use a multiple column layout for up to three different
% affiliations
%\author{\IEEEauthorblockN{random and lost}
%\IEEEauthorblockA{School of Electrical and\\Computer Engineering\\
%Georgia Institute of Technology\\
%Atlanta, Georgia 30332--0250\\
%Email: http://www.michaelshell.org/contact.html}
%\and
%\IEEEauthorblockN{Homer Simpson}
%\IEEEauthorblockA{Twentieth Century Fox\\
%Springfield, USA\\
%Email: homer@thesimpsons.com}
%\and
%\IEEEauthorblockN{James Kirk\\ and Montgomery Scott}
%\IEEEauthorblockA{Starfleet Academy\\
%San Francisco, California 96678-2391\\
%Telephone: (800) 555--1212\\
%Fax: (888) 555--1212}}
% conference papers do not typically use \thanks and this command
% is locked out in conference mode. If really needed, such as for
% the acknowledgment of grants, issue a \IEEEoverridecommandlockouts
% after \documentclass
% for over three affiliations, or if they all won't fit within the width
% of the page, use this alternative format:
% 
\author{\IEEEauthorblockN{Vahid Aref\IEEEauthorrefmark{1}\IEEEauthorrefmark{2},
Laurent Schmalen\IEEEauthorrefmark{2}, and
Stephan ten Brink\IEEEauthorrefmark{2}, 
\IEEEauthorblockA{\IEEEauthorrefmark{1}School of Information and Communication, EPFL
Lausanne, Switzerland\\ Email: vahid.aref@epfl.ch}
\IEEEauthorblockA{\IEEEauthorrefmark{2}Bell Laboratories, Alcatel-Lucent, Stuttgart, Germany\\
Email: \{Laurent.Schmalen,Stephan.tenBrink\}@alcatel-lucent.com}}}
% use for special paper notices
%\IEEEspecialpapernotice{(Invited Paper)}
% make the title area
\maketitle
\begin{abstract}
%\boldmath
Spatially coupled low-density parity-check codes show an outstanding performance
under the low-complexity belief propagation (BP) decoding algorithm.
They exhibit a peculiar convergence phenomenon above
the BP threshold of the underlying non-coupled ensemble,
with a wave-like convergence propagating through the spatial dimension
of the graph, allowing to approach the MAP threshold.
We focus on this particularly interesting
%channel erasure probability (or entropy)
regime in between the BP and MAP thresholds.

On the binary erasure channel, it has been proved \cite{kudekar2012wavesolution}
that the information propagates with a constant speed toward the successful decoding solution.
%by progressing density evolution.
We derive an upper bound on the propagation speed,
only depending on the basic parameters of the 
spatially coupled code ensemble
such as degree distribution and the coupling factor $w$.
%This, in turn, provides a bound on the number of iterations needed to decode successfully.
%
We illustrate the convergence speed of different code ensembles by simulation results,
and show how optimizing degree profiles helps to speed up the convergence.  
\end{abstract}
\begin{IEEEkeywords}
Spatially coupled LDPC ensembles, belief propagation,
density evolution, convergence speed
\end{IEEEkeywords}

\IEEEpeerreviewmaketitle
\section{Introduction}
Low-density parity-check (LDPC) codes are widely used due to their
outstanding performance under low-complexity belief propagation (BP)
decoding.  However, an error probability exceeding that of
maximum-a-posteriori (MAP) decoding has to be tolerated with
(sub-optimal) BP decoding. However, it has been empirically observed
for spatially coupled LDPC (SCLDPC) codes -- first introduced by
Felstr\"om and Zigangirov as convolutional LDPC codes \cite{ZigFel} --
that the BP performance of
these codes can improve dramatically towards the MAP performance of the
underlying code under many different settings and conditions, e.g.
\citep{TerminLDPCCCthreshold,IDLDPCC,ProtoLDPCC}.
 This phenomenon --
termed \emph{threshold saturation} -- has been proven by Kudekar,
Richardson and Urbanke in \citep{CouplLDPC09,Coupl11BMS}. In
particular, they proved that the BP threshold of a coupled LDPC
ensemble tends to its MAP threshold on any binary symmetric memoryless
channel (BMS).  The principle behind threshold saturation seems to be
very general and has been applied to a variety of more general
scenarios in information theory and computer sciences.
%These include K-satisfiability, graph coloring,
%and the Curie-Weiss model in statistical
%physics~\cite{Hassani10Couplgraphical}, compressed
%sensing~\citep{Donoho11CoupledCompress,kudekar2010compressive},
%multiterminal problems~\cite{yedla2011universality} and many
%more~\citep{amuv2012,kudekar2011mac,kudekar2011memory,nguyen2012isi,takeuchi
%2011cdma}.

More recently, a new proof technique of threshold saturation has been given
by introducing a potential function~\citep{yedla2012scaler,kumar2013bms}.
Independently, a similar technique has been used in
\cite{kudekar2012wavesolution} to study one-dimensional continuous coupled
systems. It was proven that below the MAP threshold,
the information propagates in a wave-like manner with a constant propagation
speed %through the solution
by progressing density evolution.
This result has also been extended to discrete spatially coupled
one-dimensional systems.

We can distinguish between two convergence regions for SCLDPC
codes. If the channel entropy is below the BP threshold of the
underlying non-coupled ensemble, the convergence is governed by the
ensemble degree distribution. We call this convergence \emph{intra-graph convergence}. If
the channel entropy is between the BP and MAP thresholds of the
non-coupled ensemble, the wave-like solutions manifests itself after some iterations and the
convergence -- which we denote \emph{inter-graph convergence} -- is
dominated by the coupling of the graphs. We particularly focus on this latter convergence phenomenon and are interested in the \emph{inter-graph convergence (propagation) speed} of the wave-like solution.
Knowing this speed has several important practical implications: The
degree distribution can be optimized in order to maximize the speed for a given
target channel and consequently, to minimize the number of required
iterations to decode successfully. Additionally, if we
employ windowed decoding~\citep{iyengartitwindow}, the convergence speed
gives the correct timing to shift the decoding window.
% However, it is not clear when to slide the decoding window for
% channels other than BEC.

In this paper, we derive upper bounds on the convergence speed of spatially coupled LDPC ensembles for the BEC. Moreover, we compare the convergence speeds of
different ensembles and show that the speed is sensitive to the choice of
the degree distribution. We also compare the speed of a spatially coupled
ensemble on different types of BMS channels.

%\textcolor{red}{[[I would remove the outline in a conference paper, people really interested %in the paper will read the whole paper anyways]]}
%The outline of this paper is as follows: In Section~\ref{sec:background} we
%describe the graphical structure of spatially coupled LDPC ensembles
%(SCLDPC) and introduce the associated potential function. In
%Section~\ref{sec:bound}, we derive upper bounds on the speed of spatially
%coupled LDPC ensembles for the binary erasure channel (BEC).
%In Section~\ref{sec:simulation}, we compare the convergence speeds of
%different ensembles and show that the speed is sensitive to the choice of
%the degree distribution. We also compare the speed of a spatially coupled
%ensemble on different types of BMS channels.
 
\section{Preliminaries}\label{sec:background}
We briefly explain the graphical structure of LDPC ensembles and spatially coupled LDPC ensemble. 
Then, we describe the potential functions associated to these ensembles.
\subsection{LDPC Ensembles}
LDPC codes are a subset of block codes with sparse parity check matrices.
Let $n$ be the block length of the code and $m$ be the number of constraints to satisfy. The design rate of the code is $r=1-m/n$.
We usually represent LDPC codes by a bipartite graph called factor graph. 
 To each of the $n$ bits, we assign a node, called variable node 
 and we assign a node to each of the $m$ constraints, called check node.
 We connect variable node $i$ to check node $a$ by an edge if and only if 
 the bit $i$ participates in the corresponding constraint.
To construct a random LDPC code, we sample the degree of node according to a given degree distribution. We represent the degree distribution of the variable 
nodes by a polynomial $L(x)=\sum_{d=1}^{\infty} L_d x^d$ and the degree distribution of check nodes by $R(x)=\sum_{d=1}^{\infty} R_d x^d$. For a node of degree $d$, we consider $d$ sockets from which the $d$ edges emanate. Thus,
There are $L'(1)=m R'(1)$ sockets in both variable side and check side. After labeling the sockets, we randomly choose a socket from the variable side and connect to a randomly chosen socket in the check side. Finally, we have 
a random instance of a LDPC$(L,R)$ ensemble. For the detail of construction, we refer to \cite{URbMCT}.

To study the performance of belief propagation algorithm on sparse graph codes, it is common to use \emph{density evolution}. Let
$\e$ denote the erasure probability of the channel and let $x^{(t)}$ denote the erasure probability flowing from variable side to the check side at iteration $t$, then
\begin{equation}
\label{eq:de_single}
x^{(t+1)} = \e \lambda(1-\rho(1-x^{(t)})),
\end{equation}
where 
$\lambda(x)=L'(x)/L'(1)$ and $\rho(x)=R'(x)/R'(1)$. We set $x^{(0)}=1$.
The error probability of decoding vanishes if 
\begin{equation*}
x^{(\infty)}=\lim_{t\to\infty} x^{(t)} = 0.
\end{equation*}
\begin{defin}[BP Threshold]
The BP threshold of the LDPC$(L,R)$ ensemble is defined as:
\begin{equation*}
\e_{\rm BP} = \sup\{\e\in[0,1] \mid x^{(\infty)}(\e)=0\}.
\end{equation*}
\label{def:bp_threshold}
\end{defin}
The MAP threshold $\e_{\rm MAP}$ is defined as the maximum $\e$ in which the decoding error probability of MAP decoding is equal to zero. In general, $\e_{\rm BP}\leq \e_{\rm MAP}$.

\subsection{Spatially Coupled LDPC Ensemble}
We first lay out a set of positions indexed by integers $z\in \mathbb{Z}$
on a line. This line represents a
 \emph{spatial dimension}. We fix a \emph{coupling factor} 
 which is an integer $w> 1$. 
% As the case $w=1$ corresponds to the uncoupled case, we assume in what follows that $w > 1$. 
 Consider $2N$ sets of
  variable nodes each having $n$ nodes, and locate the 
  sets in positions $1$ to $2N$. Similarly, locate $2N + 
  w-1$ sets of $m$ check nodes each, in positions $1$ to 
  $2N+w-1$. The degree of each variable (check) node is 
  randomly sampled according to the degree distribution $L(x)$ 
  ($R(x)$) leading to $m R'(1)$ sockets at 
  each position. 
  
  To construct a random instance of the 
  SCLDPC$(L,R,N,w)$ ensemble, 
  we connect the variable nodes to the 
  check nodes in the following manner: Each of sockets 
  of variable nodes at position $z\in\{1,\dots,2N\}$ is 
  connected uniformly at random to a socket of check 
  nodes within the range $\{z,\dots,z+w-1\}$. At the end 
  of this procedure, all $mR'(1)$ sockets of check nodes in position $z\in [w,2N]$ are occupied except for the check nodes in $z\in[1,w-1]$, where only a fraction $z/w$ of them are connected. Symmetrically, for the check nodes in $z\in[2N+1,2N+w-1]$, a fraction $(w+2N-z)/w$ of the sockets are connected. The rest of the sockets are free
and we can assume that they are connected to virtual variable nodes with zero erasure probability. For the detail of construction, we refer to \cite{CouplLDPC09}.

Let $x_z^{(t)}$ be the erasure probability incoming to check nodes in position $z\in\mathbb{Z}$ at iteration $t$. The density evolution (DE) equation is
\begin{equation}
x_z^{(t+1)} = \frac{1}{w} \sum_{k=0}^{w-1} \e_{z-k} \lambda\left(1- \frac{1}{w}
\sum_{j=0}^{w-1}\rho(1-x_{z+j-k}^{(t)})\right)
\label{eq:decoupl},
\end{equation}
where $\e_z=\e$ for $z\in[1,2N]$ , and zero otherwise. 
We initialize $x_z^{(0)}=1$ for $z\in [1,2N+w-1]$. For the boundary 
values, $z\notin[1,2N+w-1]$, we set $x_z^{(t)}=0$ for all $t$.

As density evolution progresses, the perfect boundary information
from the left and right sides propagates inward. 
It was shown in
\citep{CouplLDPC09,Coupl11BMS} that $x_z^{(t)}$ is non-decreasing 
sequence for $z\le N+\lfloor \frac{w-1}{2}\rfloor$ and becomes 
non-increasing sequence afterward. 
The symmetric initialization and symmetric boundary conditions
induce symmetry on all the erasure probabilities, i.e. 
$x_z^{(t)}= x_{2N+w-z}^{(t)}$. This system has been termed \emph{two-sided
spatially coupled LDPC ensemble}. One half of the spatially coupled ensemble is enough to describe the system. 
\begin{defin}[\citep{CouplLDPC09}]
Let $N'=N+\lfloor \frac{w-1}{2}\rfloor$. The one-sided spatially coupled LDPC ensemble is a modification of \eqref{eq:decoupl} defined by fixing
$x_z^{(t)} = x_{N'}^{(t)}$ for $z\ge N'$.
\end{defin}
\begin{lem}[\cite{CouplLDPC09}]
For the one-sided spatially coupled LDPC ensemble, 
the densities resulting from density evolution over the BEC satisfy
\begin{equation*}
x_z^{(t)}\ge x_{z-1}^{(t)}, 
\end{equation*}
and
\begin{equation*}
x_z^{(t)}\ge x_{z}^{(t+1)}, 
\end{equation*}
for all $z\in\mathbb{Z}$ and $t\ge 0$,\label{lem:monotonicity}
\end{lem}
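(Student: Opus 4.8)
\emph{Proof proposal.} The plan is to establish both inequalities by induction on $t$, using the order-preserving property of the density-evolution update together with the boundary structure of the one-sided ensemble. The starting point is a convenient factorization of \eqref{eq:decoupl}: writing $q_z^{(t)}=\frac{1}{w}\sum_{j=0}^{w-1}\rho(1-x_{z+j}^{(t)})$ and $p_z^{(t+1)}=\e_z\,\lambda(1-q_z^{(t)})$, a shift of the summation index gives $x_z^{(t+1)}=\frac{1}{w}\sum_{k=0}^{w-1}p_{z-k}^{(t+1)}$, so that one DE step is the composition of a length-$w$ moving average, a pointwise map, and another length-$w$ moving average. Since $\rho$ and $\lambda$ have non-negative coefficients and $\e_z\ge 0$, each of the three maps is non-decreasing coordinatewise, and so is the right clamp $x_z\mapsto x_{N'}$ on $z\ge N'$; moreover, because $\e_z=0$ for $z\le 0$, every $p_z^{(t+1)}$ with $z\le 0$ vanishes, so the factorization already reproduces the left boundary condition $x_z^{(t+1)}=0$ for $z\le 0$. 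Hence the full one-sided update $\mathbf{x}^{(t)}\mapsto\mathbf{x}^{(t+1)}=:T\mathbf{x}^{(t)}$ is monotone: $\mathbf{x}\preceq\mathbf{y}$ coordinatewise implies $T\mathbf{x}\preceq T\mathbf{y}$. I would record this order-preservation as the first lemma.

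For the time monotonicity $x_z^{(t+1)}\le x_z^{(t)}$, I would first check $\mathbf{x}^{(1)}\preceq\mathbf{x}^{(0)}$: one has $x_z^{(0)}=1$ for all $z\ge 1$ and $x_z^{(0)}=0$ for $z\le 0$, while $x_z^{(1)}\le\e\,\lambda(1)=\e\le 1$ for $z\ge 1$ (for $z\ge N'$ via the clamp) and $x_z^{(1)}=0$ for $z\le 0$. The inductive step is then immediate from monotonicity of $T$: if $\mathbf{x}^{(t)}\preceq\mathbf{x}^{(t-1)}$ then $\mathbf{x}^{(t+1)}=T\mathbf{x}^{(t)}\preceq T\mathbf{x}^{(t-1)}=\mathbf{x}^{(t)}$, i.e. $x_z^{(t+1)}\le x_z^{(t)}$ for all $z$.

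For the space monotonicity I would run a second induction on $t$ with the stronger statement, phrased for the entire sequence including boundary values, that $z\mapsto x_z^{(t)}$ is non-decreasing on $\mathbb{Z}$. At $t=0$ the sequence equals $0$ for $z\le 0$ and $1$ for $z\ge 1$, hence is non-decreasing. For the step, assume $z\mapsto x_z^{(t)}$ is non-decreasing and propagate it through the factorization: telescoping the moving average gives $q_{z+1}^{(t)}-q_z^{(t)}=\frac{1}{w}\bigl(\rho(1-x_{z+w}^{(t)})-\rho(1-x_z^{(t)})\bigr)\le 0$, so $\mathbf{q}^{(t)}$ is non-increasing; then $\lambda(1-q_z^{(t)})$ is non-decreasing in $z$, and since $\e_z$ equals $0$ for $z\le 0$ and $\e$ for $1\le z\le N'$ (using $N'\le 2N$) it is non-decreasing on $(-\infty,N']$, so the product $p_z^{(t+1)}=\e_z\,\lambda(1-q_z^{(t)})$, being a product of non-negative non-decreasing sequences, is non-decreasing on $(-\infty,N']$; finally the moving average $x_z^{(t+1)}=\frac{1}{w}\sum_{k=0}^{w-1}p_{z-k}^{(t+1)}$ of a non-decreasing sequence is non-decreasing for $z\le N'$, and the right clamp extends it by the constant $x_{N'}^{(t+1)}$, which dominates all preceding values. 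Hence $z\mapsto x_z^{(t+1)}$ is non-decreasing, and in particular $x_z^{(t)}\ge x_{z-1}^{(t)}$ for every $z$ and $t$.

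The only point that requires genuine care is the bookkeeping at the two ends of the chain: verifying that $N'\le 2N$ so that $\e_z=\e$ throughout every window used to update a position $z\le N'$, and that the zero-padding on the left and the clamp on the right are simultaneously compatible with the monotonicity of $T$ and with both base cases. Everything else reduces to the elementary facts that compositions and convex combinations of non-decreasing maps are non-decreasing; once \eqref{eq:decoupl} is rewritten in the moving-average form, no computation beyond this boundary bookkeeping remains.
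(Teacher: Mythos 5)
Your argument is correct, and it is essentially the standard proof of this lemma from \cite{CouplLDPC09}; the paper itself states Lemma~\ref{lem:monotonicity} without proof, importing it from that reference. Both the time monotonicity (via the order-preserving DE operator $T$ plus the base case $\mathbf{x}^{(1)}\preceq\mathbf{x}^{(0)}$) and the space monotonicity (via propagating the non-decreasing profile through the two moving averages, the pointwise map, the non-decreasing sequence $\e_z$ on $(-\infty,N']$, and the right clamp) are sound, and you correctly identify the only delicate points as the boundary bookkeeping and the condition $N'\le 2N$. One small imprecision: the intermediate maps $\mathbf{x}\mapsto\mathbf{q}$ and $\mathbf{q}\mapsto\mathbf{p}$ are each order-\emph{reversing} (since $\rho(1-\cdot)$ and $\lambda(1-\cdot)$ are decreasing), and it is their composition that is order-preserving; this does not affect the conclusion that $T$ is monotone.
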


\subsection{Potential Function}
We define the potential function of the LDPC$(L,R)$ ensemble as in \cite{yedla2012scaler},
\begin{multline}
\label{eq:potsingle}
U(x;\e)= \frac{1}{R'(1)} \left(1-R(1-x)\right) - x\rho(1-x) 
\\- \frac{\e}{L'(1)} L(1-\rho(1-x))
\end{multline}
It is indeed equal to the normalized (by $-L'(1)$) replica-symmetric free energies in \cite{macris2007griffith} and it has the following properties.
\begin{lem}\label{lem:pot_properties}
Consider the potential function in \eqref{eq:potsingle}:
\begin{itemize}
\item[i)] $\frac{\partial}{\partial x} U(x;\e) = \rho'(1-x)\left(x-\e \lambda\left(1-\rho(1-x)\right)\right)$.
\item[ii)] For $\e<\e_{\rm BP}$ and $0<x\le 1$, $\frac{\partial}{\partial x} U(x;\e)>0$.
\item[iii)] $U(x;\e)$ is strictly decreasing in terms of $\e$.
\item[iv)] The stationary points of the potential function are the
fixed-points of DE equation \eqref{eq:de_single}. 
\end{itemize}
\end{lem}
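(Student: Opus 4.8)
The plan is to obtain item (i) by differentiating \eqref{eq:potsingle} directly, and then to read off items (ii)--(iv) from it together with elementary properties of $\lambda,\rho,L,R$ (power series with non-negative coefficients, normalized so that $\lambda(1)=\rho(1)=1$, and, under the usual conventions, with no weight-one terms so that $\lambda(0)=0$ and $\rho(0)=0$).

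For (i), I would differentiate \eqref{eq:potsingle} term by term using the chain rule and $\rho(y)=R'(y)/R'(1)$, $\lambda(y)=L'(y)/L'(1)$: the derivative of $\frac{1}{R'(1)}\bigl(1-R(1-x)\bigr)$ is $\rho(1-x)$; the derivative of $-x\,\rho(1-x)$ is $-\rho(1-x)+x\,\rho'(1-x)$; and, writing $g(x)=1-\rho(1-x)$ so that $g'(x)=\rho'(1-x)$, the derivative of $-\frac{\e}{L'(1)}L\bigl(g(x)\bigr)$ is $-\e\,\lambda\bigl(1-\rho(1-x)\bigr)\rho'(1-x)$. The two $\pm\rho(1-x)$ terms cancel, and factoring $\rho'(1-x)$ out of the remainder gives exactly $\partial_x U(x;\e)=\rho'(1-x)\bigl(x-\e\,\lambda(1-\rho(1-x))\bigr)$.

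Items (iii) and (iv) are then short. For (iii), $\partial_\e U(x;\e)=-\frac{1}{L'(1)}L\bigl(1-\rho(1-x)\bigr)$; since $L$ has non-negative coefficients and no constant term and $1-\rho(1-x)>0$ for $0<x\le 1$ (as $\rho$ is strictly increasing on $[0,1]$ with $\rho(1)=1$), this is strictly negative, so $U$ is strictly decreasing in $\e$ for each such $x$. For (iv), $\rho'(1-x)>0$ for $x\in(0,1)$, so by (i) the equation $\partial_x U(x;\e)=0$ is equivalent there to $x=\e\,\lambda(1-\rho(1-x))$, which is precisely the fixed-point condition of \eqref{eq:de_single}; the endpoints are handled separately ($x=0$ is stationary iff $\lambda(0)=0$, which is also exactly when $x=0$ solves the DE fixed-point equation, and the possible spurious stationarity at $x=1$ in the absence of weight-two checks does not affect the applications).

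The real work is item (ii). By (i) and $\rho'(1-x)>0$ on $(0,1)$, it suffices to show $x-\e\,\lambda(1-\rho(1-x))>0$ for all $x\in(0,1]$. Let $f_\e(x)=\e\,\lambda(1-\rho(1-x))$ be the DE update map; it is continuous, strictly increasing in $x$ and in $\e$, with $f_\e(0)=0$ and $f_\e(1)=\e\le 1$. Hence the iteration $x^{(t+1)}=f_\e(x^{(t)})$ from $x^{(0)}=1$ is non-increasing and converges to the largest fixed point of $f_\e$ in $[0,1]$; so, for $\e<1$, $x^{(\infty)}(\e)=0$ if and only if $f_\e$ has no fixed point in $(0,1]$, which --- using $f_\e(0)=0$, continuity, and the intermediate value theorem applied to $f_\e(x)-x$ --- is equivalent to $f_\e(x)<x$ for all $x\in(0,1]$. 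Moreover $\e\mapsto x^{(\infty)}(\e)$ is non-decreasing (by an induction on $t$, since $f_{\e_1}\le f_{\e_2}$ pointwise when $\e_1\le\e_2$ and $f_{\e_2}$ is monotone), so $\{\e:x^{(\infty)}(\e)=0\}$ is a sub-interval of $[0,1]$ with supremum $\e_{\rm BP}$ (Definition~\ref{def:bp_threshold}); thus $\e<\e_{\rm BP}$ forces $x^{(\infty)}(\e)=0$ and hence $f_\e(x)<x$ on $(0,1]$, which is the claim. I expect this step --- rephrasing ``$\e$ is below the BP threshold'' as the strict-descent property of density evolution on $(0,1]$ --- to be the main point to get right; the rest is routine, and the benign corner cases at $x\in\{0,1\}$ caused by weight-one variable nodes or weight-two check nodes are absorbed into the standard conventions on the degree distributions.
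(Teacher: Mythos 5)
Your proof is correct; the paper states this lemma without proof (the potential function and its properties are imported from the cited reference), so there is no in-paper argument to compare against, but your derivation is the standard one and all four items check out. The only place worth a second look is the endpoint $x=1$ in item (ii): your reduction uses $\rho'(1-x)>0$, which at $x=1$ reads $\rho'(0)>0$ and fails when the minimum check degree exceeds $2$, in which case $\frac{\partial}{\partial x}U(1;\e)=0$ rather than $>0$. This is a defect of the lemma as stated rather than of your argument, and it is harmless for every use the paper makes of the lemma; your handling of the remaining corner cases ($\lambda(0)=0$, the sign of $f_\e(x)-x$ via the value $f_\e(1)=\e<1$, and the monotonicity of $x^{(\infty)}(\e)$ needed to pass from $\e<\e_{\rm BP}$ to $x^{(\infty)}(\e)=0$) is exactly right.
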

Depending on the degree distributions $L(x)$ and $R(x)$, the potential 
function can have many stationary
points. The potential function of regular LDPC$(x^3,x^6)$ ensemble for $\e=0.475$ is depicted in
Fig.~\ref{fig:pot}. Each local minimum corresponds to a stable fixed-point
of DE equation and each local maximum corresponds 
to an unstable fixed-point. For $\e<\e_{\rm BP}$ there is only one
local minimum which is $x=0$.

\begin{figure}[tb]
\setlength{\unitlength}{0.72
bp}\centering
\input{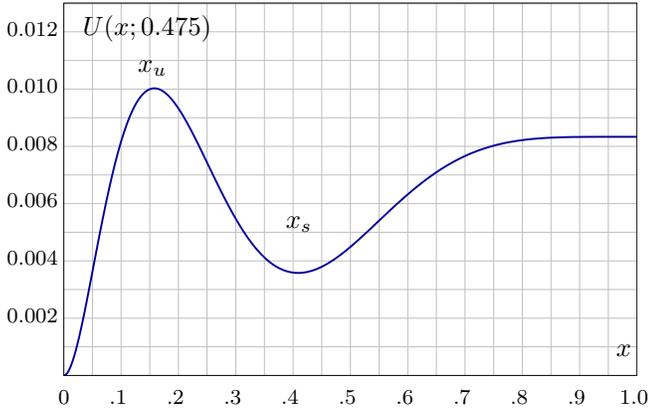}
\caption{ The potential function of regular LDPC$(x^3,x^6)$ ensemble for $\e=0.475$. The stable and unstable fixed-points of DE equation are illustrated in the plot.}
\label{fig:pot}
\end{figure}

\begin{defin}
We define the area threshold as
\begin{equation*}
\e_{\rm area} = \sup\{ \e\in[0,1]\mid U(x;\e)\ge 0,
\;{\rm for }\; x\in[0,1]\}.
\end{equation*} 
\end{defin}
It is shown in \cite[Lemma 6]{yedla2012scaler} that the area threshold is 
equal to the Maxwell threshold. The Maxwell threshold is equal to MAP threshold on regular ensembles. It has been conjectured that they are equal in general and 
the equality was recently justified for a large class of LDPC codes in \cite{andrieProof}.
%\begin{defin}[{\cite{yedla2012scaler}}]
%For $\e>\e_{\rm BP}$, the minimum unstable fixed-point is defined as
%\begin{equation*}
%x_u=\sup\left\{\tilde{x}\in[0,1]\suchthat \frac{\partial}{\partial x} U(x;\e)> 0,
%\;{\rm for }\; x\in[0,\tilde{x})\right\},
%\end{equation*} 
%and $\Delta E(\e)=\min_{x\in[x_u,1]} U(x,e)$ is the energy gap of LDPC ensemble at $\e$.
%\label{def:xu}
%\end{defin}
Thus, $U(x;\e)>0$ for $\e<\e_{\rm MAP}$ and becomes zero at $\e_{\rm MAP}$.

Now consider the SCLDPC$(L,R,N,w)$ ensemble.  
We use the extension of potential function introduced in \cite{yedla2012scaler} as follows:
\begin{align}
U(\vx;e) = \sum_{z=1}^{N'} 
&\frac{1}{R'(1)} \left(1-R(1-x_z)\right) - x_z\rho(1-x_z)\nonumber\\
&-\frac{\e}{L'(1)} L\left(1-\frac{1}{w}\sum_{j=0}^{w-1}
\rho(1-x_{z+j})\right),
\label{eq:potcoupl}
\end{align}
where $\vx=(x_1,\dots,x_{N'})$. 
We retrieve the DE equation from the partial 
derivatives of the potential function, i.e.
\begin{align*}
\frac{\partial}{\partial x_z} &U(\vx;e)
= \rho'(1-x_z) \times\\
&\left(x_z - \frac{1}{w} \sum_{k=0}^{w-1} \e_{z-k} \lambda\left( 1-\frac{1}{w}
\sum_{j=0}^{w-1} \rho(1-x_{z+j-k})\right)\right).
\end{align*}
Thus, the stationary points of $U(\vx;e)$ are 
the fixed-points of DE equation.

Define the vector $\mathbf{y}$ as $y_z = x_z^{(t)} + h (x_z^{(t+1)}-x_z^{(t)})$ for $0 \le h\le 1$. By using the first order Taylor expansion,
\begin{equation*}
U(\mathbf{y};\e)-  U(\mathbf{x}^{(t)};\e)= \sum_{z=1}^{N'}\frac{\partial U(\vx^{(t)};e)}{\partial x_z}  (y_z - x_z^{(t)}) + R_1(\mathbf{y},\vx^{(t)}),
\end{equation*}
where $R_1(\mathbf{y},\vx^{(t)})$ is the remainder term. 
Define
\begin{equation*}
\Delta U_1(\mathbf{y};\mathbf{x}^{(t)})
= \sum_{z=1}^{N'}\frac{\partial U(\vx^{(t)};e)}{\partial x_z}  (y_z - x_z^{(t)}).
\end{equation*}
Numerical evaluation shows that the remainder term is negative and small in comparison with $\Delta U_1(\mathbf{y};\mathbf{x}^{(t)})$ and then,
\begin{equation*}
U(\mathbf{y};\e)-  U(\mathbf{x}^{(t)};\e)\approx \Delta U_1(\mathbf{y};\mathbf{x}^{(t)}).
\end{equation*}
To upper-bound the speed, we must show that there is $\alpha\geq 1$ such that
for all $t$ and $\mathbf{y}$.
\begin{equation}
\alpha\left(U(\mathbf{y};\e)-  U(\mathbf{x}^{(t)};\e)\right)\leq
\Delta U_1(\mathbf{y};\mathbf{x}^{(t)}).
\label{eq:firstorderapprox}
\end{equation}
We prove $\alpha\leq 2$ in the next section and in Appendix~\ref{apn:taylor} for large $w$. However,
our simulation results suggest $\alpha=1$. The extension to $\alpha=1$ is currently work in progress.

%\begin{lem}[First order Taylor Expansion]
%Let $x_z^{(t)}$ be the erasure probabilities in \eqref{eq:decoupl}.
%Define the vector $\mathbf{y}$ as $y_z = x_z^{(t)} + h_z (x_z^{(t+1)}-x_z^{(t)})$
%for $0 \le h_z\le 1$. Then,
%\begin{equation}
%U(\mathbf{y};\e) \leq  U(\mathbf{x}^{(t)};\e) - \sum_{z=1}^{N'} \rho'(1-x_z^{(t)})h_z (x_z^{(t)} - x_z^{(t+1)})^2.
%\end{equation}
%\label{lem:taylor}
%\end{lem}
%\emph{Proof:} Appendix \ref{apn:taylor}.

\section{Bounds on The Convergence Speed}\label{sec:bound}

Consider the DE equation of one-sided SCLDPC$(L,R,N',w)$ ensemble. Denote
the BP threshold and the MAP threshold of the underlying LDPC$(L,R)$ ensemble by
$\e_{\rm BP}$ and $\e_{\rm MAP}$, respectively.
It has been proven in \citep{CouplLDPC09,yedla2012scaler} that 
there is $w_0$ such that for $\e<\e_{\rm MAP}$ and $w>w_0$,
\begin{equation*}
\lim_{t\to\infty}\lim_{N'\to\infty}\lim_{n\to\infty} x_z^{(t)}=0,
\end{equation*}
for all $z\in\mathbb{R}$. The question is how fast $x_z^{(t)}$
converges to zero. Assume that $\e_{\rm BP}<\e<\e_{\rm MAP}$.
We distinguish two distinct phases during DE progress.
In the first few iterations, all $x_z^{(t)}$ except the ones close to
$z=0$ converges to the forward DE fixed-point of the underlying ensemble, $x_{\rm BP}\ne 0$ (intra-graph convergence). Then, in the next iterations, the information propagates from the boundary $z=0$ and $x_z^{(t)}$ becomes
zero successively (inter-graph convergence).  

In this section, we bound the speed of information propagation.
First we consider LDPC ensembles whose DE equation has three fixed-points (two stable and one unstable fixed-point). The potential function of one such ensemble is shown in Fig.~\ref{fig:pot}.  
Many LDPC ensembles including regular LDPC codes have such property. Then
we consider the ensembles with more than three DE fixed-points.

\subsection{DE Equation with Three Fixed-points}

\begin{figure}[tb]
\setlength{\unitlength}{0.72
bp}\centering
\input{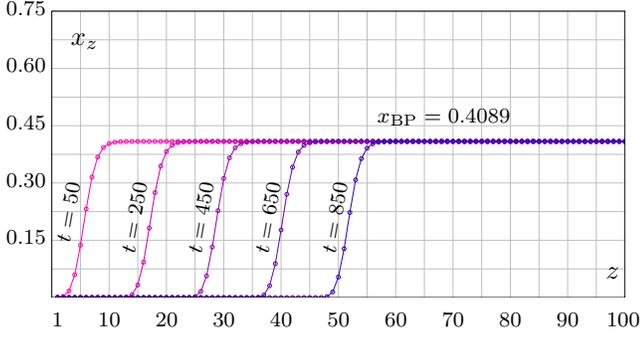}
\caption{ The solution of DE equation for one-sided SCLDPC$(x^3,x^6,100,3)$ ensemble on BEC with $\e=0.475$ in different iterations $t=50, 250, 450, 650$ and $850$. $x_{\rm BP}= 0.4089$.
By progressing density evolution, the wave-like solution moves uniformly forward with a fixed speed.}
\label{fig:wave} 
\end{figure}
Fig.~\ref{fig:wave} shows the solution of DE equation of one-sided SCLDPC$(x^3,x^6,100,3)$ ensemble in different iterations. We observe that the 
sequence $x_z^{(t)},$ $z\in\mathbb{Z}$ moves uniformly forward with a fixed speed by progressing density evolution. More precisely,
there is $T\in\mathbb{N}$ such that 
\begin{equation*}
x_z^{(t+T)}\leq x_{z-1}^{(t)}< x_z^{(t+T-1)},
\end{equation*}
for all  $t>0$ and $z\in\mathbb{Z}$. 
The existence of such a DE solution is proven in
\cite{kudekar2012wavesolution}
for the one-sided SCLDPC$(L,R,N'=\infty,w)$ ensemble 
on $\e_{\rm BP}<\e<\e_{\rm MAP}$ in which the DE equation of the underlying ensemble 
has three fixed-points: 
zero, an unstable fixed-point $x_u$ and a stable fixed-point $x_s = x_{\rm BP}$, the forward DE fixed-point.

We define the propagation speed of such a DE solution:

\begin{defin} For $I\in\mathbb{N}$, define
\begin{equation*}
T_I = \min\{T\in\mathbb{N}\mid x_z^{(t+T)}\leq x_{z-I}^{(t)}, \;{\rm for}\;t>0\;{\rm and}\;z\in\mathbb{Z} \}.
\end{equation*}
Additionally, we define the propagation speed $v_I=I/T_I$.
\end{defin}
One can show that $v_I$ is an increasing sequence and for 
\begin{equation*}
\frac{1}{T_1}=v_1\leq v_I< \frac{1}{T_1-1}.
\end{equation*}
 
We are mostly interested in knowing the speed 
for $\e$ close to $\e_{\rm MAP}$ in which 
the DE solution travels  very slowly, i.e. $T_1\gg 1$.
In the following theorems, we upper bound $v_1$ (or equivalently,
lower-bound $T_1$). We assume that at $t=0$, the wave-like solution is already formed and   
for $N'\to\infty$, 
\begin{equation*}
\lim_{z\to N'} x_z^{(t)}=x_{\rm BP},
\end{equation*}
for some iteration $t$.
\begin{lem}\label{lem:taylor}
Consider \eqref{eq:firstorderapprox}. Assume that $T_1>1$, 
then there exists $w_0$ such that for $w>w_0$, $\alpha\leq 2$.
\end{lem}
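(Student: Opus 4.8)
Our goal is to show that $\alpha=2$ is admissible in~\eqref{eq:firstorderapprox}, i.e.\ that $2\,(U(\mathbf{y};\e)-U(\vx^{(t)};\e))\le\Delta U_1(\mathbf{y};\vx^{(t)})$ for every $t$ and every $\mathbf{y}$ of the stated form $y_z=x_z^{(t)}+h\,(x_z^{(t+1)}-x_z^{(t)})$ with $h\in[0,1]$. The plan is to avoid bounding the second--order Taylor remainder and instead to compute $U(\mathbf{y};\e)-U(\vx^{(t)};\e)$ \emph{exactly} by integrating $\nabla U$ along the segment from $\vx^{(t)}$ to $\vx^{(t+1)}$, feeding in the monotonicity properties of the one--sided ensemble. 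Write $\bm{\xi}(s)=\vx^{(t)}+s\,(\vx^{(t+1)}-\vx^{(t)})$ for $s\in[0,1]$, so $\mathbf{y}=\bm{\xi}(h)$, and set $\delta_z:=x_z^{(t+1)}-x_z^{(t)}$; by Lemma~\ref{lem:monotonicity}, $\delta_z\le0$ for every $z$. Let $g_z(\vx)$ denote the right--hand side of the density--evolution map~\eqref{eq:decoupl}, so $x_z^{(t+1)}=g_z(\vx^{(t)})$, and recall the gradient identity $\frac{\partial}{\partial x_z}U(\vx;\e)=\rho'(1-x_z)(x_z-g_z(\vx))$ recorded after~\eqref{eq:potcoupl}.

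First I would record the two exact identities that set up the comparison. By the chain rule and the fundamental theorem of calculus,
\[
U(\mathbf{y};\e)-U(\vx^{(t)};\e)=\int_0^h \psi'(s)\,ds,\qquad \psi'(s):=\sum_{z=1}^{N'}\rho'(1-\xi_z(s))(\xi_z(s)-g_z(\bm{\xi}(s)))\,\delta_z ,
\]
where $\xi_z(s)=x_z^{(t)}+s\,\delta_z$; and, since $x_z^{(t)}-g_z(\vx^{(t)})=x_z^{(t)}-x_z^{(t+1)}=-\delta_z$,
\[
\Delta U_1(\mathbf{y};\vx^{(t)})=h\,\psi'(0)=-h\,A,\qquad A:=\sum_{z=1}^{N'}\rho'(1-x_z^{(t)})\,\delta_z^2\;\ge\;0 .
\]

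The heart of the argument is the pointwise estimate $\psi'(s)\le-(1-s)\,A$ on $[0,1]$, which rests on three monotonicity facts. (a) The map $g_z$ is coordinatewise nondecreasing because $\rho$ and $\lambda$ have nonnegative coefficients; since $\bm{\xi}(s)\le\vx^{(t)}$ componentwise, this yields $g_z(\bm{\xi}(s))\le g_z(\vx^{(t)})=x_z^{(t+1)}$, hence $\xi_z(s)-g_z(\bm{\xi}(s))\ge\xi_z(s)-x_z^{(t+1)}=(1-s)\,|\delta_z|\ge0$. (b) $\rho'$ is nondecreasing and $\xi_z(s)\le x_z^{(t)}$, so $\rho'(1-\xi_z(s))\ge\rho'(1-x_z^{(t)})\ge0$. (c) $-\delta_z=|\delta_z|\ge0$. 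Multiplying (a), (b), (c) termwise,
\[
-\psi'(s)=\sum_{z=1}^{N'}\rho'(1-\xi_z(s))(\xi_z(s)-g_z(\bm{\xi}(s)))\,|\delta_z|\;\ge\;(1-s)\sum_{z=1}^{N'}\rho'(1-x_z^{(t)})\,\delta_z^2=(1-s)\,A .
\]
Integrating over $[0,h]$ gives $U(\mathbf{y};\e)-U(\vx^{(t)};\e)\le-A\,(h-h^2/2)$, so that
\[
2\,(U(\mathbf{y};\e)-U(\vx^{(t)};\e))\;\le\;-A\,h\,(2-h)\;\le\;-A\,h\;=\;\Delta U_1(\mathbf{y};\vx^{(t)}),
\]
the last step using $2-h\ge1$ on $[0,1]$ together with $A,h\ge0$. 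This is precisely~\eqref{eq:firstorderapprox} with $\alpha=2$.

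As for the delicate points: one must justify the gradient identity and fact (a) for the \emph{one--sided} ensemble, where the saturation $x_z\equiv x_{N'}$ for $z\ge N'$ has to be carried through the bookkeeping --- but this is exactly the normalization under which the coupled potential of~\cite{yedla2012scaler} reproduces~\eqref{eq:decoupl}, so the identity is in hand. In the route above, the hypotheses $T_1>1$ and $w>w_0$ seem to be needed only to ensure that the wave--like profile has actually formed, so that $\vx^{(t)},\vx^{(t+1)}$ satisfy Lemma~\ref{lem:monotonicity} and $A>0$, making~\eqref{eq:firstorderapprox} non--vacuous. The genuine obstruction --- and the reason $\alpha=1$ is left open --- is that step (a) uses only the crude bound $g_z(\bm{\xi}(s))\le x_z^{(t+1)}$; to reach $\alpha=1$ one would have to show that $g_z(\bm{\xi}(s))$ undershoots $x_z^{(t+1)}$ by an amount of order $s\,|\delta_z|$, which is controlled by the local gain of the density--evolution map across the wave front and is where large $w$ --- equivalently, the $O(1/w)$ smoothness of the profile --- would have to enter.
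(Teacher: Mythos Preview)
Your argument is correct and takes a genuinely different, cleaner route than the paper. The paper expands $U(\mathbf{y};\e)$ around $\vx^{(t)}$ to second order, writes out the full Hessian and the third--order Lagrange remainder $R_2$ (equations \eqref{eq:uzzz}--\eqref{eq:R2}), and then argues term by term that the positive pieces of $R_2$ are $O(1/w^3)$ while the negative pieces one can borrow from the lower orders are $O(1/w^2)$; this is precisely where the hypotheses $T_1>1$ (giving $|x_z^{(t)}-x_z^{(t+1)}|<x_z^{(t)}-x_{z-1}^{(t)}\le x_{\rm BP}/w$) and $w>w_0$ enter. You instead bypass all of the derivative bookkeeping via the integral $\int_0^h\psi'(s)\,ds$ and the coordinatewise monotonicity of the DE map: the single observation $g_z(\bm{\xi}(s))\le g_z(\vx^{(t)})=x_z^{(t+1)}$, combined with $\rho'(1-\xi_z(s))\ge\rho'(1-x_z^{(t)})$, already pins $\psi'(s)\le-(1-s)A$, and $\alpha=2$ drops out by integrating. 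This is strictly stronger than what the paper proves: your bound holds for every $w\ge2$ and does not actually use $T_1>1$, since Lemma~\ref{lem:monotonicity} alone supplies $\delta_z\le0$ for all $t$. So your closing remark that those hypotheses are needed ``only to ensure that the wave--like profile has actually formed'' is too cautious --- in your route they are not needed at all. The one bookkeeping point you flag, the gradient identity at the one--sided boundary $z=N'$, is the same caveat the paper's own argument inherits and is harmless in the $N'\to\infty$ regime where $\delta_{N'}\to0$.
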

The sketch of the proof is given in Appendix \ref{apn:taylor}.

\begin{thm}[Upper bound]
Assume that the DE solution $x_z^{(t)}$ moves forward uniformly and
$x_w^{(t)}=0$ for $t>t_0$. Then, for any $t>t_0$,
\begin{equation}
v_1\leq B_1:=\frac{\alpha U(x_{\rm BP};\e)}{\sum_{z\in\mathbb{Z}} \rho'(1-x_z^{(t)})(x_z^{(t)}-x_{z-1}^{(t)})^2},
\end{equation}
\label{thm:upperbound1}
\end{thm}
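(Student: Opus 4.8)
The plan is to track the coupled potential $U(\vx^{(t)};\e)$ of \eqref{eq:potcoupl} along the density evolution trajectory and to compare its total decrease over one ``period'' $T_1$ of the wave against the single scalar $U(x_{\rm BP};\e)$. First I would establish a per-iteration energy-dissipation bound. Applying \eqref{eq:firstorderapprox} with $\vy=\vx^{(t+1)}$ (i.e.\ $h=1$) and substituting the partial-derivative identity $\frac{\partial}{\partial x_z}U(\vx^{(t)};\e)=\rho'(1-x_z^{(t)})(x_z^{(t)}-x_z^{(t+1)})$, which follows from comparing the gradient of \eqref{eq:potcoupl} with the DE map \eqref{eq:decoupl}, gives
\[
\Delta U_1(\vx^{(t+1)};\vx^{(t)})=-\sum_{z}\rho'(1-x_z^{(t)})\bigl(x_z^{(t+1)}-x_z^{(t)}\bigr)^2,
\]
so that $U(\vx^{(t)};\e)-U(\vx^{(t+1)};\e)\ge \tfrac1\alpha\sum_z\rho'(1-x_z^{(t)})(x_z^{(t+1)}-x_z^{(t)})^2\ge 0$; in particular $U$ is non-increasing along DE. Summing this telescoping inequality over the $T_1$ iterations $t,\dots,t+T_1-1$ bounds $U(\vx^{(t)};\e)-U(\vx^{(t+T_1)};\e)$ from below by $\tfrac1\alpha\sum_{s}\sum_z\rho'(1-x_z^{(s)})(x_z^{(s+1)}-x_z^{(s)})^2$.

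Next I would lower-bound that double sum, position by position. For fixed $z$: by Lemma~\ref{lem:monotonicity} the sequence $x_z^{(s)}$ is non-increasing, so, since $\rho'$ has non-negative coefficients and is therefore non-decreasing, $\rho'(1-x_z^{(s)})\ge\rho'(1-x_z^{(t)})$ for $s\ge t$; and by Cauchy--Schwarz applied to the $T_1$ increments, $\sum_{s=t}^{t+T_1-1}(x_z^{(s+1)}-x_z^{(s)})^2\ge\tfrac1{T_1}\bigl(x_z^{(t+T_1)}-x_z^{(t)}\bigr)^2$. Uniform forward motion gives $x_z^{(t+T_1)}=x_{z-1}^{(t)}$ (at worst $x_z^{(t+T_1)}\le x_{z-1}^{(t)}\le x_z^{(t)}$, which still yields $(x_z^{(t+T_1)}-x_z^{(t)})^2\ge(x_z^{(t)}-x_{z-1}^{(t)})^2$). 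Combining, the total potential drop over $T_1$ steps is at least $\tfrac1{\alpha T_1}\sum_z\rho'(1-x_z^{(t)})(x_z^{(t)}-x_{z-1}^{(t)})^2$, i.e.\ $\tfrac{v_1}{\alpha}$ times the denominator of $B_1$.

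It then remains to show $U(\vx^{(t)};\e)-U(\vx^{(t+T_1)};\e)\le U(x_{\rm BP};\e)$; combining with the previous step gives $v_1\le B_1$. Here I would exploit that \eqref{eq:potcoupl} is a sum of local terms $u_z$, each depending only on $(x_z,\dots,x_{z+w-1})$. Since the wave advances by exactly one position in $T_1$ iterations, $\vx^{(t+T_1)}$ is the one-position spatial shift of $\vx^{(t)}$, so $u_z(\vx^{(t+T_1)})=u_{z-1}(\vx^{(t)})$ and the sum telescopes to $U(\vx^{(t)};\e)-u_{N'}(\vx^{(t)})+u_0(\vx^{(t)})$. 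By the hypothesis $x_w^{(t)}=0$ together with Lemma~\ref{lem:monotonicity}, $x_z^{(t)}=0$ for all $z\le w$, whence $u_0(\vx^{(t)})=0$ (using $R(1)=\rho(1)=1$ and $L(0)=0$); and since $x_z^{(t)}\to x_{\rm BP}$ as $z\to N'$, the boundary term $u_{N'}(\vx^{(t)})$ collapses to the single-system potential \eqref{eq:potsingle}, i.e.\ $u_{N'}(\vx^{(t)})=U(x_{\rm BP};\e)$. Hence $U(\vx^{(t)};\e)-U(\vx^{(t+T_1)};\e)=U(x_{\rm BP};\e)$, which closes the argument.

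The step I expect to be the main obstacle is the last one, because it rests on treating the moving wave as an \emph{exact} one-position shift over $T_1$ iterations: $v_1=1/T_1$ need not be the true speed, so in general $\vx^{(t+T_1)}$ is only componentwise below the shifted profile $\tilde{\vx}$ ($\tilde x_z=x_{z-1}^{(t)}$), and one then needs $U(\vx^{(t+T_1)};\e)\ge U(\tilde{\vx};\e)$ — which requires a monotonicity or variational property of $U$ along that decreasing segment, or else the $N'\to\infty$, $n\to\infty$ limits of \cite{kudekar2012wavesolution} to make the traveling-wave shape exact. A secondary caveat is that the entire argument is conditioned on inequality \eqref{eq:firstorderapprox}, which Lemma~\ref{lem:taylor} only secures (with $\alpha\le 2$) for $w$ sufficiently large.
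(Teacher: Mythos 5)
Your argument follows the same skeleton as the paper's: a per-iteration dissipation bound obtained from \eqref{eq:firstorderapprox} together with $\frac{\partial}{\partial x_z}U(\vx^{(t)};\e)=\rho'(1-x_z^{(t)})(x_z^{(t)}-x_z^{(t+1)})$, a telescoping sum over the $T_1$ iterations, an exchange of the $z$- and $t$-sums, the monotonicity $\rho'(1-x_z^{(s)})\ge\rho'(1-x_z^{(t)})$, Jensen/Cauchy--Schwarz in time, and the spatial telescoping of \eqref{eq:potcoupl} that collapses the potential difference of the shifted profiles to the single-system value $U(x_{\rm BP};\e)$. All of that is correct and matches the paper.

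The step you flag as the main obstacle is, however, a genuine gap, and your proposed fallback does not work. After telescoping all the way to $\vx^{(t+T_1)}$ you need $U(\vx^{(t)};\e)-U(\vx^{(t+T_1)};\e)\le U(x_{\rm BP};\e)$, i.e.\ $U(\vx^{(t+T_1)};\e)\ge U(\tilde{\vx};\e)$ where $\tilde x_z=x_{z-1}^{(t)}$. But along the DE trajectory the gradient satisfies $\frac{\partial}{\partial x_z}U=\rho'(1-x_z^{(t)})(x_z^{(t)}-x_z^{(t+1)})\ge 0$, so lowering coordinates below $\tilde{\vx}$ tends to \emph{decrease} the potential; the inequality you need points the wrong way and no coordinatewise monotonicity will rescue it. The paper avoids this entirely: it telescopes only through the full DE steps $t=0,\dots,T-2$ and then applies \eqref{eq:firstorderapprox} one last time with a \emph{fractional} step, $h=(x_z^{(T-1)}-y_{z-1})/(x_z^{(T-1)}-x_z^{(T)})\in[0,1]$, which is admissible because the definition of $T_1$ guarantees $x_z^{(T)}\le y_{z-1}<x_z^{(T-1)}$ and which lands exactly on the shifted profile $\vy^{[1]}$. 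The resulting cross term $-\rho'(1-x_z^{(T-1)})(x_z^{(T-1)}-y_{z-1})(x_z^{(T-1)}-x_z^{(T)})$ is then bounded by $-\rho'(1-x_z^{(T-1)})(x_z^{(T-1)}-y_{z-1})^2$ using $y_{z-1}\ge x_z^{(T)}$, so the $T$ increments at each $z$ sum exactly to $y_z-y_{z-1}$ and Jensen applies cleanly; the equality $U(\vy^{[0]};\e)-U(\vy^{[1]};\e)=U(x_{\rm BP};\e)$ is then exact, with no need for the profile to be an exact shift of itself. Your secondary caveat about $\alpha$ is consistent with the paper: the theorem is stated conditionally on \eqref{eq:firstorderapprox}, with Lemma~\ref{lem:taylor} supplying $\alpha\le 2$ only for large $w$.
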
\emph{Proof:} Appendix \ref{apn:upperbound1}.

As we see later, the above bound is a tight upper bound for large $w$.
For the calculation, we must run DE until the wave-like solution is formed, which in practice, appears after a few iterations. 
We can additionally state the following (looser) upper bound which eliminates the need to run density evolution:
\begin{thm}
Assume that $T_1> 1$. Let $D=\max_{x\in(0,x_{\rm BP})} \vert\frac{\partial^2}{\partial x^2} U(x;\e)\vert$. Then,
\begin{equation}
B_1\leq \frac{w \alpha U(x_{\rm BP};\e)}{2U(x_u;\e)-U(x_{\rm BP};\e) - D\frac{x^2_{\rm BP}}{w}}.
\end{equation}
which simplifies for $w\to\infty$ to
\begin{equation*}
v_1\leq B_2:=\frac{w \alpha U(x_{\rm BP};\e)}{2U(x_u;\e)-U(x_{\rm BP};\e)}.
\end{equation*}
\label{thm:upperbound2}
\end{thm}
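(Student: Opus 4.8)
\emph{Proof plan.} By Theorem~\ref{thm:upperbound1}, $v_1\le B_1=\alpha U(x_{\rm BP};\e)/S(t)$ with $S(t):=\sum_{z\in\mathbb{Z}}\rho'(1-x_z^{(t)})\bigl(x_z^{(t)}-x_{z-1}^{(t)}\bigr)^2$, so it suffices to prove the lower bound $w\,S(t)\ge 2U(x_u;\e)-U(x_{\rm BP};\e)-D x_{\rm BP}^2/w$. The plan is to write the right-hand side as an integral of the single-system potential and telescope it along the wave profile. By Lemma~\ref{lem:pot_properties}(i), $\frac{\partial}{\partial x}U(x;\e)=\rho'(1-x)\bigl(x-\e\lambda(1-\rho(1-x))\bigr)$ is positive on $(0,x_u)$ and negative on $(x_u,x_{\rm BP})$ for $\e_{\rm BP}<\e<\e_{\rm MAP}$, hence $\int_0^{x_{\rm BP}}\bigl|\frac{\partial}{\partial x}U(x;\e)\bigr|\,\dd x=2U(x_u;\e)-U(x_{\rm BP};\e)$. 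By Lemma~\ref{lem:monotonicity} and the standing assumptions the profile $z\mapsto x_z^{(t)}$ is non-decreasing from $0$ at the front to $x_{\rm BP}$ in the bulk, so the intervals $[x_{z-1}^{(t)},x_z^{(t)}]$ tile $[0,x_{\rm BP}]$; a first-order Taylor expansion of $U$ at $x_z^{(t)}$ on each such interval, with remainder bounded using $D$ (the single interval straddling $x_u$ contributing $O(1/w^2)$), together with the bound $\max_z(x_z^{(t)}-x_{z-1}^{(t)})=O(x_{\rm BP}/w)$ forced by the width-$w$ averaging in~\eqref{eq:decoupl}, which gives $\sum_z(x_z^{(t)}-x_{z-1}^{(t)})^2=O(x_{\rm BP}^2/w)$, reduces the claim to
\[
\sum_{z\in\mathbb{Z}}\Bigl|\tfrac{\partial}{\partial x}U(x_z^{(t)};\e)\Bigr|\bigl(x_z^{(t)}-x_{z-1}^{(t)}\bigr)\;\le\; w\,S(t).
\]

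The heart of the argument is a position-wise comparison of $\bigl|\frac{\partial}{\partial x}U(x_z^{(t)};\e)\bigr|=\rho'(1-x_z^{(t)})\bigl|x_z^{(t)}-\e\lambda(1-\rho(1-x_z^{(t)}))\bigr|$ with $w\,\rho'(1-x_z^{(t)})\bigl(x_z^{(t)}-x_{z-1}^{(t)}\bigr)$. I would split $x_z^{(t)}-\e\lambda(1-\rho(1-x_z^{(t)}))$ into $\bigl(x_z^{(t)}-x_z^{(t+1)}\bigr)+\bigl(x_z^{(t+1)}-\e\lambda(1-\rho(1-x_z^{(t)}))\bigr)$. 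For the first term, Lemma~\ref{lem:monotonicity} gives $x_z^{(t+1)}\le x_z^{(t)}$ and the hypothesis $T_1>1$ gives $x_z^{(t+1)}\ge x_z^{(t+T_1-1)}>x_{z-1}^{(t)}$, so $0\le x_z^{(t)}-x_z^{(t+1)}<x_z^{(t)}-x_{z-1}^{(t)}$. For the second term, by~\eqref{eq:decoupl} and monotonicity of both the profile and $\lambda(1-\rho(1-\cdot))$ (and using $\e_{z-k}=\e$ throughout the transition region, which lies at $z\ge w$ because $x_w^{(t)}=0$), $x_z^{(t+1)}$ and $\e\lambda(1-\rho(1-x_z^{(t)}))$ both lie between $\e\lambda(1-\rho(1-x_{z-w+1}^{(t)}))$ and $\e\lambda(1-\rho(1-x_{z+w-1}^{(t)}))$, so the second term is at most a Lipschitz constant of $\e\lambda(1-\rho(1-\cdot))$ times the total variation of the profile over the window $\{z-w+1,\dots,z+w-1\}$. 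Summing over $z$, re-bounding each window variation by single squared increments via the arithmetic--geometric-mean inequality and using $\rho'(1-x_z^{(t)})\ge\rho'(1-x_{\rm BP})$ on the support, the sum collapses into a multiple of $S(t)$ with factor $w$; combined with the telescoping identity this yields the bound, and sending $w\to\infty$ removes the $D$-term to give $v_1\le B_2$.

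I expect this last comparison to be the main obstacle. It is most delicate near the \emph{toe} of the wave: there $x_z^{(t)}$ is small while $\e\lambda(1-\rho(1-x_z^{(t)}))$ is of much smaller order (it vanishes at $0$ to order $\ge 2$ for codes without degree-one or degree-two variable nodes), so that $\bigl|\frac{\partial}{\partial x}U(x_z^{(t)};\e)\bigr|\approx\rho'(1-x_z^{(t)})\,x_z^{(t)}$ and one must show that the coupled recursion forces the profile to rise by a factor close to $w/(w-1)$ per position there. It also requires enough care with the numerical constants — in particular with the overcounting incurred when a width-$w$ window is summed over all $z$ — that the multiplier of $S(t)$ is exactly $w$ and not larger; the natural tool for recovering the sharp constant is summation by parts against $\nabla U(\vx^{(t)})$, exploiting the identity $\frac{\partial}{\partial x_z}U(\vx^{(t)})=\rho'(1-x_z^{(t)})\bigl(x_z^{(t)}-x_z^{(t+1)}\bigr)$ and the bound $x_z^{(t)}-x_z^{(t+1)}\le x_z^{(t)}-x_{z-1}^{(t)}$ valid for $T_1>1$.
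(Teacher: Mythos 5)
Your overall reduction is the same as the paper's: you correctly reduce the theorem to the lower bound $w\sum_z \rho'(1-x_z^{(t)})(x_z^{(t)}-x_{z-1}^{(t)})^2 \ge 2U(x_u;\e)-U(x_{\rm BP};\e)-Dx_{\rm BP}^2/w$, and your telescoping of the single-system potential along the monotone profile, with the remainder controlled by $D$ and by $\max_z (x_z^{(t)}-x_{z-1}^{(t)})\le x_{\rm BP}/w$, is essentially the paper's argument (the paper organizes it as a split of $[0,x_{\rm BP}]$ into four sub-intervals according to the signs of $U'$ and $U''$, plus a mean-value-theorem correction near $x_u$). So the remaining content is exactly the inequality $\sum_z \vert U'(x_z^{(t)};\e)\vert (x_z^{(t)}-x_{z-1}^{(t)}) \le w\sum_z \rho'(1-x_z^{(t)})(x_z^{(t)}-x_{z-1}^{(t)})^2$, which, since $\vert U'(x)\vert=\rho'(1-x)\vert x-\e\lambda(1-\rho(1-x))\vert$, amounts to the per-position bound $x_z^{(t)}-x_{z-1}^{(t)} \ge \vert x_z^{(t)}-\e\lambda(1-\rho(1-x_z^{(t)}))\vert / w$.

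This per-position bound is precisely the paper's Lemma~\ref{lem:diff_bound}, and it is the step your proposal does not actually prove. The paper obtains it with constant exactly $1/w$ by a direct manipulation of the coupled recursion \eqref{eq:decoupl}: splitting one term off the window average, using monotonicity of $\nu_z^{(t)}=1-\frac1w\sum_j\rho(1-x_{z+j}^{(t)})$ in $z$ and $x_{z-1}^{(t-1)}<x_z^{(t)}$ (from $T_1>1$), one compares $w\,x_{z-1}^{(t)}$ with $(w-1)x_z^{(t)}+\e\lambda(1-\rho(1-x_z^{(t)}))$ (and symmetrically for the reverse sign), with no Lipschitz constants appearing. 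Your route — splitting off $x_z^{(t)}-x_z^{(t+1)}$ and bounding $x_z^{(t+1)}-\e\lambda(1-\rho(1-x_z^{(t)}))$ by a Lipschitz constant of $\e\lambda(1-\rho(1-\cdot))$ times the profile variation over a width-$(2w-1)$ window, then summing over $z$ with AM--GM and the weight comparison $\rho'(1-x_{\rm BP})\le\rho'(1-x_z)\le\rho'(1)$ — cannot recover the factor $w$: the window overcounting contributes a factor of order $2w$, multiplied by the Lipschitz constant (of order $\e\lambda'(1)\rho'(1)$) and by $\rho'(1)/\rho'(1-x_{\rm BP})$, so it yields at best $C\,w\,S(t)$ with $C\gg 1$ for typical ensembles (e.g.\ $C$ in the tens for the $(x^3,x^6)$ ensemble at $\e=0.475$), hence a bound strictly weaker than the stated $B_2$. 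You flag this constant problem yourself, but the proposed fix ("summation by parts against $\nabla U(\vx^{(t)})$") is not developed and does not obviously remove the Lipschitz and overcounting losses; the argument that does work is the pointwise Lemma~\ref{lem:diff_bound}-type manipulation of the DE recursion itself, which your proposal is missing.
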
\emph{Proof:} Appendix \ref{apn:upperbound2}

\begin{thm}[Lower Bound~{\cite[Theorem 2]{kudekar2012wavesolution}} ]
The speed $v_1$ is lower-bounded by
\begin{equation}
v_1\ge \frac{w U(x_{\rm BP})}{x_{\rm BP} (1- \rho(x_{\rm BP}))}-\frac{1}{w}.
\end{equation}
For $w\gg 1$, $v_1\geq LB:= w U(x_{\rm BP})/ \left(x_{\rm BP} (1- \rho(x_{\rm BP})\right)$.
\label{thm:lowerbound}
\end{thm}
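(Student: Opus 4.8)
\emph{Proof sketch.} This statement is \cite[Theorem~2]{kudekar2012wavesolution}; the plan is to run an energy balance with the coupled potential $U(\vx;\e)$ of \eqref{eq:potcoupl} and then obtain the lower bound by a monotone-comparison argument. Throughout assume the wave-like solution has already formed, so that by Lemma~\ref{lem:monotonicity} the profile $x_z^{(t)}$ is non-increasing in $t$ and non-decreasing in $z$, equals $0$ once $z$ lies to the left of the front, and tends to $x_{\rm BP}$ as $z\to N'$; recall also that one step of \eqref{eq:decoupl} is a monotone map of the profile.

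\textbf{Energy per spatial shift versus energy per iteration.} Write $U(\vx;\e)=\sum_z g(x_z,\dots,x_{z+w-1};\e)$ with $g$ the summand of \eqref{eq:potcoupl}. Since $g$ is translation invariant away from the two ends, $g(0,\dots,0;\e)=U(0;\e)=0$, and $g(x_{\rm BP},\dots,x_{\rm BP};\e)=U(x_{\rm BP};\e)$, a telescoping computation shows that the unit spatial shift $x_z\mapsto x_{z-1}$ decreases $U$ by exactly $U(x_{\rm BP};\e)$; hence along the traveling wave the potential drops by $v_1\,U(x_{\rm BP};\e)$ per iteration. On the other hand, from the exact first-order expansion $U(\vx^{(t+1)};\e)-U(\vx^{(t)};\e)=\Delta U_1+R_1$ with non-positive remainder $R_1$, combined with the partial-derivative identity for \eqref{eq:potcoupl} that turns $\Delta U_1$ into $-\sum_z\rho'(1-x_z^{(t)})(x_z^{(t+1)}-x_z^{(t)})^2$, one iteration releases
\begin{equation*}
U(\vx^{(t)};\e)-U(\vx^{(t+1)};\e)\ \ge\ \sum_z \rho'(1-x_z^{(t)})\bigl(x_z^{(t)}-x_z^{(t+1)}\bigr)^2\ \ge\ 0 .
\end{equation*}
Feeding these two facts into a Cauchy-Schwarz estimate over the transition region already yields the \emph{upper} bounds of Theorems~\ref{thm:upperbound1} and \ref{thm:upperbound2}; the lower bound is the complementary, harder direction.

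\textbf{The lower bound by comparison.} The extra input is that the wave cannot be \emph{too flat}: if its transition layer spans $\ell$ positions then a steady profile advances by $\sim v_1 x_{\rm BP}/\ell$ per position, and $\ell$ is forced to be $\Theta(w)$ because \eqref{eq:decoupl} averages the map over windows of length $w$ and the fixed-point identity $x_{\rm BP}=\e\lambda(1-\rho(1-x_{\rm BP}))$ bounds the largest increment of a steady profile by $\lesssim x_{\rm BP}/w$. I would make this quantitative by a sub-solution argument: exhibit a rigidly translating trial profile $\underline\phi$ interpolating monotonically between $0$ and $x_{\rm BP}$ and moving at speed $v_{\rm LB}:=\dfrac{w\,U(x_{\rm BP};\e)}{x_{\rm BP}\bigl(1-\rho(x_{\rm BP})\bigr)}-\dfrac1w$; verify, using the two estimates above together with the Riemann-sum comparison $\sum_z\rho'(1-x_z)(x_z-x_{z-1})\le\int_0^{x_{\rm BP}}\rho'(1-u)\,\mathrm{d}u$ (which is where the denominator and the $-\tfrac1w$ correction come from), that one step of \eqref{eq:decoupl} never lets $\underline\phi$ fall behind speed $v_{\rm LB}$, i.e.\ that $\underline\phi$ is a sub-solution; then, since the true profile dominates a sufficiently left-shifted translate of $\underline\phi$ at formation, monotonicity of \eqref{eq:decoupl} propagates the domination to all $t$, forcing the true front to move at least as fast as $\underline\phi$. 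This gives $v_1\ge v_{\rm LB}$.

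\textbf{Main obstacle.} The delicate part is constructing $\underline\phi$ and verifying the sub-solution inequality, which requires the quantitative shape control of the wave -- transition width $\Theta(w)$, monotone tails toward $0$ and toward $x_{\rm BP}$, and the per-position slope bound -- together with a genuinely rigorous proof that $R_1\le 0$ rather than the numerical observation used around \eqref{eq:firstorderapprox} and in Theorem~\ref{thm:upperbound1}. These are exactly the ingredients supplied by the wave-existence analysis of \cite{kudekar2012wavesolution}, which is why the bound is imported from there; one also has to take the limits $n\to\infty$, $N'\to\infty$, $t\to\infty$ in that order so that the notion of a steady translate is well posed and the two boundary contributions in the telescoping are exactly $0$ and $U(x_{\rm BP};\e)$.
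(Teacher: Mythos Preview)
The paper does not supply a proof of this theorem: it is quoted verbatim from \cite[Theorem~2]{kudekar2012wavesolution}, with only the remark that the ratio $U(x_{\rm BP})/(x_{\rm BP}(1-\rho(x_{\rm BP})))$ coincides with the ``area'' of that reference. There is therefore no proof in the paper to compare your proposal against, and you yourself recognise this at the outset.

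As a standalone plan, your outline points in the right direction but does not close. The energy-balance paragraph is fine (and is exactly what the paper uses for the \emph{upper} bounds in Appendices~\ref{apn:upperbound1}--\ref{apn:upperbound2}), but it plays no formal role in your lower-bound argument: for a lower bound on $v_1$ you would need an \emph{upper} bound on the per-iteration potential drop, not the inequality $U(\vx^{(t)})-U(\vx^{(t+1)})\ge\sum_z\rho'(1-x_z^{(t)})(x_z^{(t)}-x_z^{(t+1)})^2$, so that part is decorative here. The actual content is the sub-solution/comparison paragraph, and there the work is entirely deferred: you neither construct the trial profile $\underline\phi$ nor verify that one step of \eqref{eq:decoupl} keeps it a sub-solution, which is precisely the substantive part of \cite{kudekar2012wavesolution}. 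Your heuristic for the denominator is also off: the Riemann-sum bound $\sum_z\rho'(1-x_z)(x_z-x_{z-1})\le\int_0^{x_{\rm BP}}\rho'(1-u)\,\mathrm{d}u=1-\rho(1-x_{\rm BP})$ does not produce the factor $x_{\rm BP}(1-\rho(x_{\rm BP}))$ in the statement; that quantity is the normalising ``area'' from \cite{kudekar2012wavesolution} and arises from a different bookkeeping than the one you sketch. In short, the proposal is a reasonable reading guide to the cited reference rather than an independent proof, and the paper itself does not attempt more than that either.
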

Note that $U(x_{\rm BP})/(x_{\rm BP} (1- \rho(x_{\rm BP})))$ is equal to the area defined in \cite{kudekar2012wavesolution}.

\subsection{DE Equation with Many DE Fixed-points}

In general, the DE equation of irregular LDPC 
ensembles can have more than three fixed-points in some $\e$. In this case,
the DE solution of coupled ensemble can become more complex.
It can be a mixture of wave-like solutions with distinct speeds, in which the solution with larger speed overlaps the solutions with lower speed.
For instance, consider
one-sided SCLDPC$(L,R,N'=600,w=3)$ ensemble with $L(x)=\frac{153}{283}x^2 +
\frac{102}{283}x^3 + \frac{28}{283}x^{51}$ and $R(x)=x^{16}$. For $\e_{\rm BP}\approx 0.353<\e<\e^*\approx0.399$,
the underlying ensemble has three DE fixed-points and thus, the DE solution is as explained in the previous section. For $\e^*<\e<\e_{\rm MAP}\approx0.403$, the underlying ensemble has five DE fixed-points, namely 0, $s_1$ and $s_2=x_{\rm BP}$ the stable fixed-points and $u_1$ and $u_2$ the unstable ones. In this case, two wave-like DE solutions appear. In Fig.~\ref{fig:2wave}, the DE solution for $\e=0.4$ is depicted. We observe that  for each $z$, $x_z^{(t)}$
 first reduces from $s_2$ to $s_1$ by the first 
wave-like solution and then, reduces from $s_1$ to $0$ by the second wave-like solution.
\begin{figure}[tb]
\setlength{\unitlength}{0.72
bp}\centering
\input{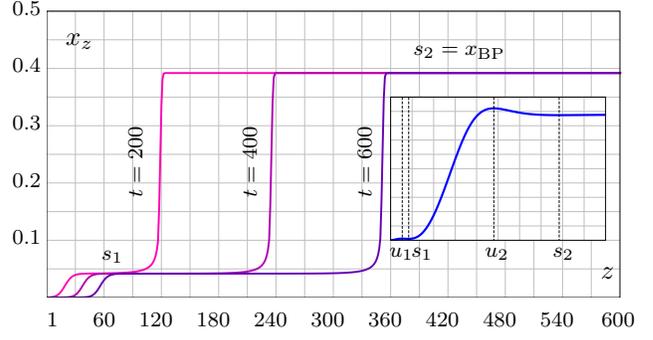}
\caption{ The DE solution of one-sided SCLDPC$(L,R,600,3)$ ensemble with $L(x)=\frac{153}{283}x^2 +
\frac{102}{283}x^3 + \frac{28}{283}x^{51}$ and $R(x)=x^{16}$ on BEC $\e=0.4$ in different iterations $t=200, 400, 600$.
The inlet shows the potential function of LDPC$(L,R)$ for $\e=0.4$.}
\label{fig:2wave} 
\end{figure}

Let $v^u_1$ and $v^l_1$ denote the propagation speed of the upper and 
the lower solutions ($v^u_1\geq v^l_1$).
  $v^l_1$ determines the total number of iterations required for decoding.
We can still apply $B_1$ to upper bound $v^l_1$ with the following assumption.
The waves are well-separated such that the right tail of the lower wave converges to $s_1$ from below and the left tail of upper wave converges to 
$s_1$ from above (see Fig.~\ref{fig:2wave}). If this assumption holds at some $t$, we can separately study the waves by following the proof of Theorem \ref{thm:upperbound1}. Shortly,
\begin{align*}
\alpha (U(s_2;\e)\!-\!U(s_1;\e))&\geq v^u_1\sum_{\mathclap{x_z\in(s_1,s_2]}}
\rho'(1-x_z^{(t)})(x_z^{(t)}-x_{z-1}^{(t)})^2\\
\alpha U(s_1;\e)&\geq v^l_1 \sum_{\mathclap{x_z\in(0,s_1]}}
\rho'(1-x_z^{(t)})(x_z^{(t)}-x_{z-1}^{(t)})^2
\end{align*}
and by summing up both inequalities, and noting that $v_1^u\geq v_1^l$, we have
\begin{align*}
\alpha U(x_{\rm BP};\e)=\alpha U(s_2;\e)\geq v^l_1 \sum_{z\in\mathbb{Z}}
\rho'(1-x_z^{(t)})(x_z^{(t)}-x_{z-1}^{(t)})^2.
\end{align*}
One can also adapt $B_2$ for the general case by following 
the similar proof.

\section{Simulation Results}\label{sec:simulation}
\begin{figure}[tb]
\setlength{\unitlength}{0.72bp}\centering
\input{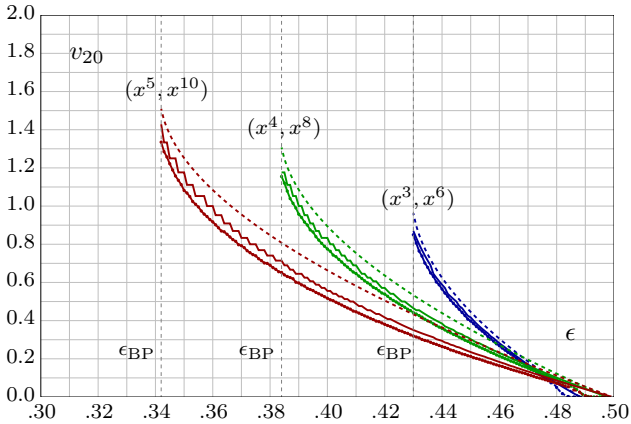}
\caption{ $v_{20}$ vs. $\e$ for some regular SCLDPC$(x^k,x^{2k},N,w=6)$ ensembles and $\e_{\rm BP}<\e<\e_{\rm MAP}$ in each case. $v_{20}$ is computed by
DE (solid curves), and also by running BP over code instances with $N=100$ and $n=5000$. The
results are averaged over 100 instances (dotted curves). The upper bound 
$B_1$ is also shown by dashed curves.}
\label{fig:regularspeed}
\end{figure}
As we mentioned before, 
the numerical results support the conjectured claim $\alpha=1$. In fact,
it is the speed given by the first 
order Taylor approximation. We therefore set $\alpha=1$ in the sequel.  
Consider the SCLDPC$(x^3,x^6,N,w)$ ensemble. We give $v_1$ and $v_{20}$ for $\e=0.475$ and different $w$
in Tab.~\ref{table:36}.
We observe that the speed increases
almost linearly in terms of $w$. 
Note that $v_I$ is an increasing sequence
 in terms of $I$ and saturates to the real propagation speed.
 However, for $v_1\ll 1$, 
$v_1\approx v_I$ for any $I\in\mathbb{N}$. In table
 \ref{table:36}, 
We also compute 
the upper bounds and the lower bound for each $w$. 
We observe that $B_1$ becomes tighter (to $v_I$ with large $I$) 
by increasing $w$. $B_2$ seems to be larger than the real speed by
an almost constant 
factor which does not vary significantly with $\e$.

\begin{table}[tbh]
\centering
\caption{Propagation speed of SCLDPC$(x^3,x^6,N,w)$ ensemble
with different $w$ sizes on BEC with $\e=0.475$.}
\label{table:36}
\renewcommand{\arraystretch}{1.3} \scalebox{1}{\begin{tabular}{@{}cccccc@{}}\toprule

$w$ & 2 & 4 & 8 & 16 & 32\\
\midrule
$v_1$ & 0.035 & 0.077 & 0.143 & 0.250 & 0.333\\
$v_{20}$  & 0.035 & 0.079 & 0.160 & 0.318 & 0.625\\
\midrule
$B_1$ & 0.0503 & 0.091 & 0.172 & 0.333 & 0.656\\
$B_2$  & 0.435 & 0.869 & 1.738 & 3.477 & 6.955\\
\midrule
$LB$ & 0.019 & 0.038 & 0.075 & 0.151 & 0.302\\
%\multicolumn{1}{l}{{\multirow{2}{*}{$L$}}}& \multicolumn{1}{l}{{\multirow{2}{*}{$\beta$}}} & \multicolumn{3}{c}{$w$}\\ \cmidrule{3-5}
%& & \multicolumn{1}{c}{2} & \multicolumn{1}{c}{3}& \multicolumn{1}{c}{4}\\
%\midrule
%\multirow{2}{*}{$128$} & $\beta_d$ & 1.033 & 1.037& 1.038 \\
% & $\beta_c$ & 1.047 & 1.051 & 1.052 \\
% \midrule
%\multirow{2}{*}{$256$} & $\beta_d$ &  1.031 & 1.035 & 1.037 \\
%& $\beta_c$ & 1.043 & 1.045 & 1.047\\ 

\bottomrule
\end{tabular}

}
\end{table}

Now consider the asymptotically regular SCLDPC$(x^k,x^{2k},N=100,w=6,n=5000)$ ensembles. We plot $v_{20}$ and $B_1$ for 
$\e_{\rm BP}<\e<\e_{\rm MAP}$ and 
for $k=3,4$ and $5$ in Fig.~\ref{fig:regularspeed}.  The curves of $B_1$ are shown 
by the dashed lines. We compute $v_{20}$ in two distinct ways: first by the DE equation which are shown by solid lines.
We observe that it is a decreasing function and becomes zero at $\e_{MAP}$. 
Since $T_{20}$ is an integer value and hence, it is not so sensitive
to small changes in $\e$, $v_{20}$ is a staircase function.
Second, we compute the average $v_{20}$ by running BP over
100 code instances. 
These curves
are shown by dotted lines.
As we expect, the empirical $v_{20}$ is close to $v_{20}$ derived by
DE. However,
 the curves deviate from each other for $\e$ close to the MAP threshold.
The reason is due to the error floor that appears for such $\e$ as a result of finite length effects (finite $n$).
%that for such $\e$, there is an error floor because of the finite $n$.  
The dashed-dotted tail of each
curve shows where the error-floor appears.

Figure~\ref{fig:regularspeed} illustrates that the order of ensembles in terms of speed changes by $\e$. It suggests
 that for a desired $\e$, we can change the degree 
 distributions in order to obtain a faster convergence speed. To be more fair,
 we compare two LDPC ensembles with the same average degree. 
% Consider the SCLDPC$(0.95x^3 + 0.05 x^{23},x^8,N,w=3)$ ensemble.
 In Fig.~\ref{fig:degree4curves}, we depict $v_{20}$ for the SCLDPC$(\frac{19}{20}x^3 + \frac{1}{20} x^{23},x^8,N,w=3)$ ensemble
 and for the regular SCLDPC$(x^4,x^8,N,w=3)$ ensemble. We observe that the former has much larger speed than the latter over a wide range of $\e$. For instance, the speed
 is about $1.8$ times larger at $\e=0.475$. However, the regular ensemble has a larger speed for $\e$ very close to its $\e_{\rm MAP}$. 
 \begin{figure}[tb]
\setlength{\unitlength}{0.72bp}\centering
\input{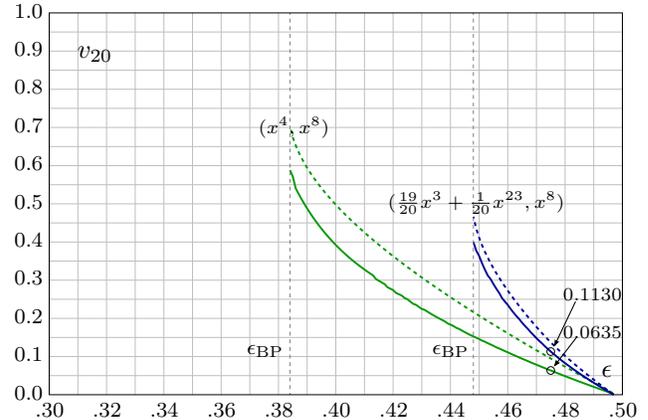}
\caption{ $v_{20}$ vs. $\e$ for regular SCLDPC$(x^4,x^{8},N,w=3)$ ensemble and for SCLDPC$(L(x),x^8,N,w=3)$ ensemble with $L(x)=\frac{19}{20}x^3 + \frac{1}{20}x^{23}$
  and $\e_{\rm BP}<\e<\e_{\rm MAP}$  in each case. Both ensembles have the same average variable degree equal to $4$. $v_{20}$ is computed by
DE (solid curves). The upper bound 
$B_1$ is also shown by dashed curves.}
\label{fig:degree4curves}
\end{figure}

These examples show the need for optimizing the speed for a given $\e$.
 For this purpose, one can use $B_1$ 
 since it is tight for $\e$ close to the MAP threshold.
  
\subsection{Other Types of BMS Channels}
Consider the SCLDPC$(x^3,x^6,N=100,3,n=5000)$ ensemble.  Similar to
the BEC, simulations results suggest that belief propagation also leads to a
wave-like solution on the BSC and on the AWGN channel. In Fig.~\ref{fig:bms}, we
compare the empirical speeds for the BEC, BSC and AWGN channel.  We compute $v_{20}$
for the channel entropy between BP threshold and MAP threshold in each
case.  Each point is averaged over 100 code instances.  We refer to
\cite[Chapter 4]{URbMCT} for the definition of channel entropy.  As
discussed before, we observe an error floor close to the MAP threshold
due to finite length effects.  The dashed tail of each curve
indicates where the error-floor appears.
\begin{figure}[tb]
\setlength{\unitlength}{0.72bp}\centering
\input{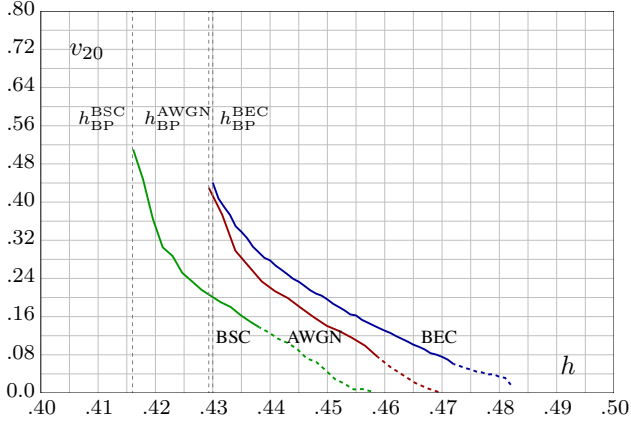}
\caption{ $v_{20}$ vs. channel entropy $h$ for regular SCLDPC$(x^3,x^{6},N,w=3)$ ensembles over BEC, BSC and AWGN channel.
For each channel, $h$ varies between BP threshold and MAP threshold of underlying 
ensembles. $v_{20}$ is computed by running BP over code instances with $N=100$ and $n=5000$. The
results are averaged over 100 instances. The dashed tail of each curve shows where error floor appears. The speed on the BEC is an upper bound over the speed on other channels.}
\label{fig:bms}
\end{figure}

The BP threshold of the underlying ensemble is larger for the BEC than
for the AWGN channel and it is furthermore larger for the AWGN channel
than for the BSC.  The MAP thresholds and the speed preserve the same
order. Thus, we conjecture that the upper bound of the speed on BEC is
also an upper bound of the speed on the other channels with the same
channel entropy.

%\section{Degree Distribution Optimization}
\section{Conclusion}
In this paper we have analyzed the convergence speed of the wave-like solution that dominates the convergence of spatially coupled codes if the channel entropy lies between the BP and MAP thresholds. Using the Taylor expansion of the spatially coupled code's potential function, we have derived a new upper bound on the convergence speed. This upper bound is based on the degree distribution of the code ensemble and requires the wave-like DE solution. We have additionally derived a looser upper bound that solely depends on the potential function, i.e., on the degree distribution of the ensemble and on the coupling factor. Finally, we have compared the bound with the convergence speeds obtained by density evolution and by simulation of code instances. We have observed that the upper bound seems to be an upper bound  for any binary input memoryless symmetric channel. %Future work will include tightening of the proof for $\alpha=1$.

\begin{appendices}
\section{Sketch of Proof of Lemma \ref{lem:taylor}}\label{apn:taylor}
Since $T_1>1$,
\begin{multline}
x_z^{(t)}-x_z^{(t+1)}< x_z^{(t)}-x_{z-1}^{(t)}=
%\frac{1}{w}\!\!\left(\sum_{k=0}^{w-1}\e_{z-k}\lambda(\frac{1}{w}\sum_{j=0}^{w-1}
%\!\!1\! -\!\rho(1\!-\!x_{z+j-k}^{(t-1)}))\right.\\
%-
%\left.\sum_{k=0}^{w-1}\e_{z-k-1}\lambda(\frac{1}{w}\sum_{j=0}^{w-1}
%\!\!1\! -\!\rho(1\!-\!x_{z+j-k-1}^{(t-1)}))\right)\\
%=
%\!\!
\frac{1}{w}\!\!\left(\!\!\e_{z}\lambda(\frac{1}{w}\sum_{j=0}^{w-1}
\!\!1\! -\!\rho(1\!-\!x_{z+j}^{(t-1)}))\right. \\
\left.-\e_{z-w}\lambda(\frac{1}{w}\sum_{j=0}^{w-1}
1-\rho(1-x_{z+j-w}^{(t-1)})) \right)\\
\leq \frac{1}{w}\e_{z}\lambda(\frac{1}{w}\sum_{j=0}^{w-1}
1-\rho(1-x_{z+j}^{(t-1)}))\leq
x_{\rm BP}/w,
\end{multline}
%Since $T_1>1$,
%\begin{align}
%x_z^{(t)}-x_z^{(t+1)} &< x_z^{(t)}-x_{z-1}^{(t)} \leq x_z^{(t)}  \nonumber\\
%&\stackrel{(a)}{\leq} \e_{z}\lambda\left(\frac{1}{w}\sum_{j=0}^{w-1}1- \rho(1-x_{z+j}^{(t-1)})%\!\right) \nonumber\\
%&\stackrel{(b)}{\leq} \frac{1}{w}\sum_{j=0}^{w-1}\e_z\lambda\left(1\!-\!\rho(1\!-\!x_{z+j}%^{(t-1)})\right) \stackrel{(c)}{\leq} \frac{x_{\rm BP}}{w},
%\end{align}
%where $(a)$ is due to Lemma~\ref{lem:monotonicity}, $(b)$ is due to Jensen's inequality and 
where the last inequality is obtained because $x_{z}^{(t-1)}\leq x_{\rm BP}$ for all $z$.
Now consider $\mathbf{y}$ to be $y_z = x_z^{(t)} + h (x_z^{(t+1)}-x_z^{(t)})$ for $0 \le h\le 1$. Thus,
\begin{equation*}
0 \leq x_z^{(t)}-y_z\leq h x_{\rm BP}/w. 
\end{equation*}

We write the second order Taylor expansion of $U(\mathbf{y};\e)$ around $U(\vx^{(t)};\e)$:
\begin{equation*}
U(\mathbf{y};\e)-U(\vx^{(t)}) = P(\mathbf{y},\vx^{(t)}) + R_2(\mathbf{y},\vx^{(t)}),
\end{equation*}
where $P(\cdot), R_2(\cdot)$ are defined in what
follows. We have
\begin{multline*}
P(\mathbf{y},\vx^{(t)}) = \sum_{z=1}^{N'}\frac{\partial U(\vx^{(t)};e)}{\partial x_z}  (y_z - x_z^{(t)})\\
+\frac12 \sum_{z=1}^{N'}\sum_{i=1}^{N'}\frac{\partial^2 U(\vx^{(t)};e)}{\partial x_z \partial x_i}(y_z-x_z^{(t)})(y_i-x_i^{(t)}),
\end{multline*}
leading to
\begin{multline*}
P(\mathbf{y},\vx^{(t)}) =\sum_{z=1}^{N'}\frac{\partial U(\vx^{(t)};e)}{\partial x_z}  (y_z - x_z^{(t)})\\
+\frac{1}{2}\sum_{z=1}^{N'} \rho'(1-x_z^{(t)})
(y_z-x_z^{(t)})^2 - \frac{1}{2} \sum_{z=1}^{N'} Q_z\\
- \frac{1}{2} \sum_{z=1}^{N'} \rho''(1-x_z^{(t)})(x_z^{(t)}-x_z^{(t+1)})
(y_z-x_z^{(t)})^2
\end{multline*}
and 
\begin{multline*}
\!\!\!Q_z \!=\! \frac{\rho'(1\!-\!x_z^{(t)})(y_z\!-\!x_z^{(t)})}{w}\sum_{k=0}^{w-1}
\e\lambda'(1\!-\!\frac{1}{w}\sum_{j=0}^{w-1} \rho(1\!-\!x_{z+j-k}^{(t)}))\times\\
\times\sum_{j=0}^{w-1} \frac{\rho'(1-x_{z+j-k}^{(t)})}{w}(y_{z+j-k}-x_{z+j-k}^{(t)}).
\end{multline*}
$R_2(\mathbf{y},\vx^{(t)})$ is the remainder term. According 
to Taylor's theorem, there exists $\mathbf{c}= \vx^{(t)}+ \gamma (\mathbf{y}-\vx^{(t)})$ for some $0\leq \gamma\leq1$ that
\begin{equation*}
R_2(\mathbf{y},\vx^{(t)}) = \frac{1}{6} \sum_{z,i,d} 
\frac{\partial^3 U(\mathbf{c};\e)}{\partial x_z \partial x_i\partial x_d}  
(y_z-x_z^{(t)})(y_i-x_i^{(t)})(y_d-x_d^{(t)})
\end{equation*}
where $z,i,d\in\{1,\dots,N'\}.$ The partial derivatives  
$\frac{\partial^3}{\partial x_z \partial x_i\partial x_d} U(\mathbf{c};\e)$, denoted by $U_{zid}(\mathbf{c};\e)$, are given in equations \eqref{eq:uzzz} to
\eqref{eq:uizd}. By rearranging the terms, we can write $R_2(\mathbf{y},\vx^{(t)})$ in the form of \eqref{eq:R2} in which there are two positive terms and 
two negative terms. Now, we show for a large enough $w$,
\begin{equation}
U(\mathbf{y};\e)-U(\vx^{(t)}) \leq 
\frac{1}{2} 
\sum_{z=1}^{N'}\frac{\partial U(\vx^{(t)};e)}{\partial x_z}  (y_z - x_z^{(t)}).
\label{eq:halfderiv}
\end{equation}
\begin{figure*}[t]
\begin{multline}
\label{eq:uzzz}
U_{zzz}(\mathbf{c};\e) = \rho'''(1-c_z) \left(c_z -\frac{1}{w}\sum_{k=0}^{w-1}\e_{z-k}\lambda\left(\frac{1}{w}\sum_{j=0}^{w-1} 1-\rho(1-c_{z+j-k})\right) \right) -2\rho''(1-c_z) +\\ 
3 \frac{\rho''(1\!-\!c_z)\rho'(1\!-\!c_z)}{w^2} \sum_{k=0}^{w-1} \e_{z-k}\lambda'\!\left(\frac{1}{w}\sum_{j=0}^{w-1} 1\!-\!\rho(1\!-\!c_{z+j-k})\right) - 
\frac{(\rho'(1\!-\!c_z))^3}{w^3} \sum_{k=0}^{w-1} \e_{z-k}\lambda'' \!\left(\frac{1}{w}\sum_{j=0}^{w-1} 1\!-\!\rho(1\!-\!c_{z+j-k})\right)\!,
\end{multline}
\begin{multline}
U_{zzi}(\mathbf{c};\e) =
\frac{\rho''(1-c_z)\rho'(1-c_i)}{w^2} \sum_{k=\max(0,z-i)}^{w-1+ \min(0,z-i)} \e_{z-k}\lambda'\left(\frac{1}{w}\sum_{j=0}^{w-1} 1-\rho(1-c_{z+j-k})\right)\\
 - \frac{(\rho'(1-c_z))^2 \rho'(1-c_i)}{w^3} \sum_{k=\max(0,z-i)}^{w-1+ \min(0,z-i)} \e_{z-k}\lambda''\left(\frac{1}{w}\sum_{j=0}^{w-1} 1-\rho(1-c_{z+j-k})\right),
\label{eq:uzzi}
\end{multline}
\begin{equation}
U_{zzi}(\mathbf{c};\e) =U_{ziz}(\mathbf{c};\e) =U_{izz}(\mathbf{c};\e),
\label{eq:uziz}
\end{equation}
\begin{equation}
U_{zid}(\mathbf{c};\e) =- \frac{\rho'(1-c_z) \rho'(1-c_i)\rho'(1-c_d)}{w^3} \sum_{k=\max(0,z-i,z-d)}^{w-1+ \min(0,z-i,z-d)} \e_{z-k}\lambda''\left(1-\frac{1}{w}\sum_{j=0}^{w-1} \rho(1-c_{z+j-k})\right),
\label{eq:uzid}
\end{equation}
\begin{equation}
U_{zid}(\mathbf{c};\e) =U_{zdi}(\mathbf{c};\e)=U_{dzi}(\mathbf{c};\e)=
U_{diz}(\mathbf{c};\e) =U_{izd}(\mathbf{c};\e) =U_{idz}(\mathbf{c};\e)
\label{eq:uizd}
\end{equation}
\begin{multline}
R_2(\mathbf{y},\vx^{(t)})= 
\frac{1}{6} \sum_{z=1}^{N'} \rho'''(1-c_z) \left(\!c_z \!-\!\frac{1}{w}\sum_{k=0}^{w-1}\e_{z-k}\lambda\left(1\!-\!\frac{1}{w}\sum_{j=0}^{w-1} \rho(1-c_{z+j-k})\right)\!\!\right)(y_z-x_z^{(t)})^3 - \frac{1}{3} \sum_{z=1}^{N'} \rho''(1-c_z)(y_z-x_z^{(t)})^3-\\
-
\frac{1}{6} \sum_{z=1}^{N'} 
\frac{\rho'(1-c_z)(y_z-x_z^{(t)})}{ w}\sum_{k=0}^{w-1}
\e_{z-k}\lambda''\left(1-\frac{1}{w}\sum_{j=0}^{w-1} \rho(1-c_{z+j-k})\right)\times
\left(\sum_{j=0}^{w-1} \frac{\rho'(1-c_{z+j-k})}{w}(y_{z+j-k}-x_{z+j-k}^{(t)})\right)^2\\
+\frac{3}{6} 
\label{eq:R2}\sum_{z=1}^{N'}
\frac{\rho''(1-c_z)(y_z-x_z^{(t)})^2}{ w}\sum_{k=0}^{w-1}
\e_{z-k}\lambda'\left(1-\frac{1}{w}\sum_{j=0}^{w-1} \rho(1-c_{z+j-k})\right)\times
\left(\sum_{j=0}^{w-1} \frac{\rho'(1-c_{z+j-k})}{w}(y_{z+j-k}-x_{z+j-k}^{(t)})\right).
\end{multline}
\hrulefill
\end{figure*}
We know that $\frac{\partial}{\partial x_z} U(\vx^{(t)};e) = \rho'({1-x^{(t)}})(x^{(t)}-x^{(t+1)})$ and then,
\begin{multline*}
P(\mathbf{y},\vx^{(t)})\leq\frac{1}{2} 
\sum_{z=1}^{N'}\frac{\partial U(\vx^{(t)};e)}{\partial x_z}  (y_z - x_z^{(t)})- \frac{1}{2} \sum_{z=1}^{N'} Q_z\\
- \frac{1}{2} \sum_{z=1}^{N'} \rho''(1-x_z^{(t)})(x_z^{(t)}-x_z^{(t+1)})
(y_z-x_z^{(t)})^2
\end{multline*}
We have the result if we show that
\begin{multline}
R_2(\mathbf{y},\vx^{(t)})- \frac{1}{2} \sum_{z=1}^{N'} Q_z\\
- \frac{1}{2} \sum_{z=1}^{N'} \rho''(1-x_z^{(t)})(x_z^{(t)}-x_z^{(t+1)})
(y_z-x_z^{(t)})^2\leq 0.
\label{eq:negative}
\end{multline}
For each $z$, the second term and the third term of this expression are negative. Furthermore, first term and the last term of $R_2(\mathbf{y},\vx^{(t)})$ in \eqref{eq:R2} are also negative (note that $y_z-x_z^{(t)} \leq 0$).
%Furthermore, the negative terms of $R_2(\mathbf{y},\vx^{(t)})$ are also in our favor. 
Finally, we consider the positive terms of $R_2(\mathbf{y},\vx^{(t)})$, which are, for each $z$,
\begin{equation}
-\frac{1}{3} \rho''(1-c_z)(y_z-x_z^{(t)})^3,
\label{eq:pos1}
\end{equation}
and
\begin{multline}
\!\!-\frac{\rho'(1\!-\!c_z)(y_z\!-\!x_z^{(t)})}{6 w}\sum_{k=0}^{w-1}
\e\lambda''\left(\!1\!-\!\frac{1}{w}\sum_{j=0}^{w-1} \rho(1\!-\!c_{z+j-k})\right)\times
\\
\times\left(\sum_{j=0}^{w-1} \frac{\rho'(1-c_{z+j-k})}{w}(y_{z+j-k}-x_{z+j-k}^{(t)})\right)^2.
\label{eq:pos2}
\end{multline}
Now compare \eqref{eq:pos1} and
\begin{equation}
\frac{1}{2}\rho''(1-x_z^{(t)})(x_z^{(t)}-x_z^{(t+1)})
(y_z-x_z^{(t)})^2.
\end{equation}
For a large $w$ the above term dominates \eqref{eq:pos1} since
\begin{equation}
x_z^{(t)} - h x_{\rm BP}/w\leq y_z\leq c_z\leq x_z^{(t)}.
\end{equation}
Now consider the pair of $Q_z$ and \eqref{eq:pos2} for each $z$. 
We have the bound 
\begin{multline*}
\left\vert\sum_{j=0}^{w-1} \frac{\rho'(1-c_{z+j-k})}{w}(y_{z+j-k}-x_{z+j-k}^{(t)})\right\vert\\
< h\frac{x_{\rm BP}}{w}\sum_{j=0}^{w-1} \frac{\rho'(1-c_{z+j-k})}{w}
< h\frac{x_{\rm BP}}{w}\rho'(1).
\end{multline*}
$Q_z$ is  positive and $O(\frac{1}{w^2})$ for a large $w$ but
\eqref{eq:pos2} is $O(\frac{1}{w^3})$ and thus, there exists $w'$
such that for $w>w'$, $Q_z$ becomes larger than \eqref{eq:pos2}.

Consolidating the above observations provides the desired inequality
\eqref{eq:negative} for a large $w$ and hence, 
\eqref{eq:halfderiv} follows.

\section{Proof of Theorem \ref{thm:upperbound1}}\label{apn:upperbound1}
Recall that $x_z^{(t)}$ is an increasing sequence in terms of $z$ and a
decreasing sequence in terms of $t$.
We assume that the DE solution moves uniformly, i.e. there is $T\in \mathbb{N}$ such that
\begin{equation*}
x_{z}^{(t+T)}\leq x_{z-1}^{(t)}<x_z^{(t+T-1)},
\end{equation*}
for all $t\in\mathbb{N}$ and $z\in\mathbb{Z}$. Without loss of generality, assume that $t=0$ and define $y_z = x_{z}^{(0)}$. Thus,
\begin{equation}
y_z\!=\!x_z^{(0)}\!\geq\! x_z^{(1)}\!\geq\cdots\geq\! x_z^{(T-1)}\!>\! x_{z-1}^{(0)}\!=\!y_{z-1}
\!\geq\! x_z^{(T)}.
\label{eq:update_order}
\end{equation}
Let $\vx^{(t)}$ denote the sequence $x_z^{(t)}$ at iteration $t$ and let
 $\vy^{[k]}$ denote the $k$-shifted sequence $y_{z}^{[k]}=y_{z-k}$ for $k\in\mathbb{Z}$. 
According to Lemma \ref{lem:taylor},
\begin{equation*}
\alpha(U(\vx^{(t+1)};\e)-U(\vx^{(t)};\e)) \leq  -\sum_{z=1}^{N'} \rho'(1-x_z^{(t)})
(x_z^{(t)}-x_z^{(t+1)})^2.
\end{equation*}
Thus, we can write
\begin{align*}
U(\vy^{[1]};\e)-U(\vy^{[0]};\e)
&=\sum_{t=0}^{T-2} U(\vx^{(t+1)};\e)-U(\vx^{(t)};\e)\\
&+ U(\vy^{[1]};\e) - U(\vx^{(T-1)};\e),
\end{align*}
and hence 
\begin{align*}
\alpha(U(\vy^{[1]}&;\e)\!-\!U(\vy^{[0]};\e))
\leq -\!\sum_{t=0}^{T-2} \sum_{z=1}^{N'} \rho'(1\!-\!x_z^{(t)})
(x_z^{(t)}\!-\!x_z^{(t+1)})^2\\
&-\sum_{z=1}^{N'}\rho'(1\!-\!x_z^{(T-1)})(x_z^{(T-1)}\!-\!y_{z-1})(x_z^{(T-1)}\!-\!x_z^{(T)}),
\end{align*}
where in the last application of Lemma~\ref{lem:taylor}, we have used $h = ( x_z^{(T-1)}-y_{z-1})/(x_z^{(T-1)}-x_z^{(T)})$ since we assume that
the DE solution moves uniformly.
 Since $y_{z-1}\geq x_z^{(T)}$, we have
 \begin{multline*}
\alpha(U(\vy^{[1]};\e)\!-\!U(\vy^{[0]};\e))
\leq -\! \sum_{t=0}^{T-2} \sum_{z=1}^{N'} \rho'(1\!-\!x_z^{(t)})
(x_z^{(t)}\!-\!x_z^{(t+1)})^2\\
-\!\sum_{z=1}^{N'}\rho'(1\!-\!x_z^{(T-1)})(x_z^{(T-1)}\!-\!y_{z-1})^2,  
\end{multline*}
or equivalently,
 \begin{multline*}
\alpha(U(\vy^{[0]};\e)-U(\vy^{[1]};\e))
 \geq  \sum_{t=0}^{T-2} \sum_{z=1}^{N'}  \rho'(1-x_z^{(t)})
(x_z^{(t)}-x_z^{(t+1)})^2\\
+\sum_{z=1}^{N'}\rho'(1-x_z^{(T-1)})(x_z^{(T-1)}-y_{z-1})^2,  
\end{multline*}
Now consider the right hand side of the above inequality and let us
exchange the order of the sums. We can change it even for
$N'\to\infty$ because the summands are positive. Then for each $z$, we
minimize the following summation:
\begin{align*}
 \sum_{t=0}^{T-2}  &\rho'(1-x_z^{(t)})
(x_z^{(t)}-x_z^{(t+1)})^2\\
&+\rho'(1-x_z^{(T-1)})(x_z^{(T-1)}-y_{z-1})^2
=   \sum_{t=0}^{T-1} \rho'(1-x_z^{(t)}) l_t^2
\end{align*}
where $l_t = x_z^{(t)}-x_z^{(t+1)}$
 for $0 \leq t < T-1$ and $l_{T-1} = x_z^{(T-1)}-y_{z-1}$.
 Since $\rho'(1-x)$ is a decreasing function, we have
\begin{equation*}
\sum_{t=0}^{T-1} \rho'(1-x_z^{(t)}) l_t^2\geq 
\rho'(1-y_{z}) \sum_{t=0}^{T-1} l_t^2.
\end{equation*}
Due to \eqref{eq:update_order}, $l_t\geq 0$ and
\begin{equation*}
\sum_{t=0}^{T-1} l_t = x_{z}^{(0)}-y_{z-1} = y_{z}-y_{z-1}.
\end{equation*}
The application of Jensen's inequality leads to
\begin{equation*}
\frac{1}{T} \sum_{t=0}^{T-1} l_t^2\geq \left(\frac{1}{T}\sum_{t=0}^{T-1} l_t \right)^2 = \frac{1}{T^2}(y_{z}-y_{z-1})^2 ,
\end{equation*}
and thus,
\begin{equation}
\alpha(U(\vy^{[0]};\e)-U(\vy^{[1]};\e)) \geq \frac{1}{T} \sum_{z=1}^{N'}     \rho'(1-y_{z})(y_{z}-y_{z-1})^2.
\label{eq:1}
\end{equation}
Considering the definition of the potential function in \eqref{eq:potcoupl}, we observe that $U(\vy^{[0]};\e)-U(\vy^{[1]};\e)$ 
is a telescoping sum with only the difference between the first and the last term remaining, i.e.
\begin{multline*}
U(\vy^{[0]};\e)-U(\vy^{[1]};\e) = 
\frac{1}{R'(1)} \left(1-R(1-y_{N'})\right)- \\
- y_{N'}\rho(1-y_{N'})
-\frac{1}{R'(1)} \left(1-R(1-y_{0})\right) + y_{0}\rho(1-y_{0})-\\
-\frac{\e}{L'(1)} L\left(1-\frac{1}{w}\sum_{j=0}^{w-1}
\rho(1-y_{N'+j})\right)-\\
+\frac{\e}{L'(1)} L\left(1-\frac{1}{w}\sum_{j=0}^{w-1}
\rho(1-y_{j})\right).
\end{multline*}
With $y_{z}=0$ for $z\leq w$ (by the theorem assumption) and $y_{z}=y_{N'}=x_{\rm BP}$ for $z>N'$, we have
\begin{equation}
U(\vy^{[0]};\e)-U(\vy^{[1]};\e) = U(x_{\rm BP};\e)\,.
\label{eq:2}
\end{equation}
Finally, the statement of the theorem is concluded from \eqref{eq:1} and \eqref{eq:2}.

\section{Proof of Theorem \ref{thm:upperbound2}}\label{apn:upperbound2}

In order to prove the theorem, we require the following useful lemma.
\begin{lem}\label{lem:diff_bound}
Let $x_z^{(t)}$ be the solution of DE equation \eqref{eq:decoupl} at some iteration $t$. Further assume that $T_1 > 1$. Then,
\begin{equation}
\label{eq:diff_bound}
x_z^{(t)}-x_{z-1}^{(t)} \geq \left\vert\frac{x_z^{(t)}-\e\lambda\left(1-\rho(1-x_z^{(t)})\right)}{w}\right\vert,
\end{equation}
for $z\leq N'$.
\end{lem}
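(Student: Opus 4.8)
The plan is to pass to the ``check-side'' quantities
\[
g_m \;:=\; \e_m\,\lambda\!\Bigl(1-\tfrac1w\textstyle\sum_{j=0}^{w-1}\rho(1-x_{m+j}^{(t-1)})\Bigr),
\]
built from the iteration-$(t-1)$ densities, so that the DE recursion \eqref{eq:decoupl} reads $x_z^{(t)}=\tfrac1w\sum_{k=0}^{w-1}g_{z-k}$ and therefore $x_z^{(t)}-x_{z-1}^{(t)}=\tfrac1w(g_z-g_{z-w})$ --- exactly the telescoping identity already used in Appendix~\ref{apn:taylor}. Abbreviating the single-system map by $f(x):=\e\lambda(1-\rho(1-x))$, which is non-decreasing, it then suffices to prove $g_z-g_{z-w}\ge\lvert x_z^{(t)}-f(x_z^{(t)})\rvert$. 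I would get this by squeezing both $g_z$ and $g_{z-w}$ between $x_z^{(t)}$ and $f(x_z^{(t)})$ and then splitting on the sign of $x_z^{(t)}-f(x_z^{(t)})$.

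Concretely, I would establish three facts for $1\le z\le N'$. (i) Since $x_{m+j}^{(t-1)}$ is non-decreasing in $m$ (Lemma~\ref{lem:monotonicity}) and $\e_m\in\{0,\e\}$ is non-decreasing on $\{z-w,\dots,z\}$, the sequence $g_m$ is non-decreasing in $m$ on that range; as $x_z^{(t)}$ is the average of $g_{z-w+1},\dots,g_z$, this yields $g_{z-w}\le x_z^{(t)}\le g_z$. (ii) From $x_{z+j}^{(t-1)}\ge x_z^{(t-1)}\ge x_z^{(t)}$ and monotonicity of $\rho$, $\lambda$ and $f$, one gets $g_z\ge f(x_z^{(t-1)})\ge f(x_z^{(t)})$. (iii) Here the hypothesis $T_1>1$ enters: uniform motion of the wave gives $x_{z-1}^{(t-1)}<x_z^{(t+T_1-2)}\le x_z^{(t)}$, the last step because $T_1-1\ge 1$ and $x_z^{(\cdot)}$ is non-increasing in the iteration index; hence every argument $x_{z-w+j}^{(t-1)}$, $0\le j\le w-1$, appearing in $g_{z-w}$ is $\le x_z^{(t)}$ (using also the spatial monotonicity $x_{z-w+j}^{(t-1)}\le x_{z-1}^{(t-1)}$), so $g_{z-w}\le f(x_z^{(t)})$.

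The claim then follows by case analysis: if $f(x_z^{(t)})\ge x_z^{(t)}$, combine (ii) and (i) to obtain $g_z-g_{z-w}\ge f(x_z^{(t)})-x_z^{(t)}$; if $f(x_z^{(t)})\le x_z^{(t)}$, combine (i) and (iii) to obtain $g_z-g_{z-w}\ge x_z^{(t)}-f(x_z^{(t)})$; in either case $g_z-g_{z-w}\ge\lvert x_z^{(t)}-f(x_z^{(t)})\rvert$, and dividing by $w$ gives \eqref{eq:diff_bound}. The step I expect to require the most care is (iii): one must be certain the uniform-motion estimate $x_{z-1}^{(t-1)}<x_z^{(t)}$ --- which is precisely the statement that the wave has not advanced a full position in one iteration, i.e.\ $T_1>1$ --- is available for every $z\le N'$ at the iteration under consideration, and that boundary positions $m\le 0$ (where $\e_m=0$ forces $g_m=0$) do not disturb the monotonicity in (i). Both are handled most cleanly by working with the one-sided ensemble, for which $x_z^{(t)}$ is non-decreasing in $z$ for all $z$ and the profile is merely translated once the wave-like solution has formed.
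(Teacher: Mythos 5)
Your proof is correct and takes essentially the same route as the paper's: the paper likewise works with the check-side terms $g_m=\e_m\lambda(\nu_m^{(t-1)})$, uses their monotonicity in $m$ to compare the extremal terms with the average $x_z^{(t)}$, bounds $\nu_{z-w}^{(t-1)}\le 1-\rho(1-x_{z-1}^{(t-1)})$, and invokes $T_1>1$ exactly as you do to obtain $x_{z-1}^{(t-1)}<x_z^{(t)}$. Your telescoping identity $w(x_z^{(t)}-x_{z-1}^{(t)})=g_z-g_{z-w}$ together with the single sandwich $g_{z-w}\le\min\{x_z^{(t)},f(x_z^{(t)})\}$ and $g_z\ge\max\{x_z^{(t)},f(x_z^{(t)})\}$ is just a cleaner packaging of the paper's two separate chains of inequalities.
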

\begin{proof}First we prove that $x_z^{(t)}-x_{z-1}^{(t)} \geq \left(x_z^{(t)}-\e\lambda\left(1-\rho(1- x_z^{(t)})\right)\right)/w$.
Define
\begin{equation*}
\nu_z^{(t)}:=1 - \frac{1}{w}\sum_{j=0}^{w-1} \rho(1-x_{z+j}^{(t)}).
\end{equation*}
Note that $\nu_z^{(t)}$ is an increasing sequence in terms of $z$, because $1-\rho(1-x)$ is an increasing function and $x_z^{(t)}$ is an increasing sequence in $z$.
Thus, according to \eqref{eq:decoupl},
\begin{align*}
x_{z}^{(t)} = \frac{1}{w}\sum_{k=0}^{w-1} \e_{z-k} \lambda(\nu_{z-k}^{(t-1)})\leq \e_{z} \lambda(\nu_{z}^{(t-1)}),
\end{align*}
and by rewriting the above equation,
\begin{align*}
(w-1) x_{z}^{(t)}&\geq \sum_{k=0}^{w-1} \e_{z-k} \lambda(\nu_{z-k}^{(t-1)}) -\e_{z} \lambda(\nu_{z}^{(t-1)})\\
&=\sum_{k=1}^{w-1} \e_{z-k} \lambda(\nu_{z-k}^{(t-1)})=
\sum_{k=0}^{w-2} \e_{z-1-k} \lambda(\nu_{z-1-k}^{(t-1)})
\end{align*}
Thus, we can upper bound $wx_{z-1}^{(t)}$ as
\begin{align}
w x_{z-1}^{(t)} 
&= \sum_{k=0}^{w-1} \e_{z-1-k} \lambda(\nu_{z-1-k}^{(t-1)})\nonumber\\
&=  \sum_{k=0}^{w-2} \e_{z-1-k} \lambda(\nu_{z-1-k}^{(t-1)})+
\e_{z-w} \lambda(\nu_{z-w}^{(t-1)})\nonumber\\
&\leq (w-1)x_{z}^{(t)} + \e_{z-w} \lambda(\nu_{z-w}^{(t-1)})\nonumber\\
&\leq  (w-1)x_{z}^{(t)} + \e \lambda\left(1-\rho(1-x_{z-1}^{(t-1)})\right)\nonumber\\
&\leq  (w-1)x_{z}^{(t)} + \e \lambda\left(1-\rho(1-x_{z}^{(t)})\right).
\label{eq:3}
\end{align}
The last two inequalities result from the facts that $\e_{z-w}\leq \e$ and that
\begin{equation*}
\nu_{z-w}^{(t-1)}=1-\frac{1}{w}\sum_{j=0}^{w-1} \rho(1-x_{z-w+j}^{(t-1)})
\leq 1-\rho(1-x_{z-1}^{(t-1)}).
\end{equation*}
Furthermore, $x_{z-1}^{(t-1)}<x_{z}^{(t)}$ since $T_1>1$.
The result is given by rearrangement of \eqref{eq:3}. Similarly, we prove that
$x_z^{(t)}-x_{z-1}^{(t)} \geq \left(\e\lambda\left(1-\rho(1- x_z^{(t)})\right)-x_z^{(t)}\right)/w$.
\begin{align*}
x_{z-1}^{(t)} = 
\frac{1}{w} \sum_{k=0}^{w-1} \e_{z-1-k} \lambda(\nu_{z-1-k}^{(t-1)})
\geq \e_{z-w} \lambda(\nu_{z-w}^{(t-1)}),
\end{align*}
and thus,
\begin{align*}
(w-1)x_{z-1}^{(t)} &\leq  
\sum_{k=0}^{w-1} \e_{z-1-k} \lambda(\nu_{z-1-k}^{(t-1)})
- \e_{z-w} \lambda(\nu_{z-w}^{(t-1)})\\
&= \sum_{k=0}^{w-2} \e_{z-1-k} \lambda(\nu_{z-1-k}^{(t-1)})
=\sum_{k=1}^{w-1} \e_{z-k} \lambda(\nu_{z-k}^{(t-1)}).
\end{align*}
For $x_z^{(t)}$, we have
\begin{align*}
wx_z^{(t)} &= \sum_{k=1}^{w-1} \e_{z-k} \lambda(\nu_{z-k}^{(t-1)}) +
\e_{z} \lambda(\nu_{z}^{(t-1)})\\
&\geq (w-1)x_{z-1}^{(t)} +\e_{z} \lambda(\nu_{z}^{(t-1)})\\
&\geq (w-1)x_{z-1}^{(t)} +\e \lambda(1-\rho(1-x_z^{(t-1)}))\\
&\geq (w-1)x_{z-1}^{(t)} +\e \lambda(1-\rho(1-x_z^{(t)})).
\end{align*}
We conclude the result since
\begin{align*}
(w+1)x_z^{(t)}\geq w x_{z-1}^{(t)} +\e \lambda(1-\rho(1-x_z^{(t)})).
\end{align*}

\end{proof}
We now turn to the proof of Theorem \ref{thm:upperbound2}. To derive an
upper bound on the convergence speed, we must lower-bound the following summation:
\begin{equation}
\sum_{z=1}^{N'} \rho'(1-x_z^{(t)})(x_z^{(t)}-x_{z-1}^{(t)})^2.
\label{eq:profile_sum}
\end{equation}
Let $y_z = x_z^{(t)}$ for $z\in\mathbb{Z}$.
From Lemma \ref{lem:pot_properties} and Lemma \ref{lem:diff_bound}, we have
\begin{align*}
\rho'(1-y_z)(y_z-y_{z-1})\geq  \left\vert\frac{\partial U(y_z;\e)}{w\partial y_z}\right\vert =: \left\vert\frac{U'(y_z;\e)}{w}\right\vert.
\end{align*}
We can divide the interval $[0,x_s]$
into four sub-intervals (see also Fig.~\ref{fig:pot}):
\begin{itemize}
\item[i)] $x\in[0,x_1]$ where $U'(x;\e)\geq 0$ and $U''(x;\e)\geq 0$.
\item[ii)] $x\in(x_1,x_u]$ where $U'(x;\e)\geq 0$ and $U''(x;\e)\leq 0$.
\item[iii)] $x\in(x_u,x_2]$ where $U'(x;\e)\leq 0$ and $U''(x;\e)\leq 0$.
\item[iv)] $x\in(x_2,x_s]$ where $U'(x;\e)\leq 0$ and $U''(x;\e)\geq 0$.
\end{itemize}
Let $a$ denote the index where $y_a\leq x_1<y_{a+1}$. Similarly, we define  $b$ and $c$
for $x_u$ and $x_2$, respectively. We split \eqref{eq:profile_sum} into four sum according to the sun intervals and lower-bound them separately.

\textbf{i)} For $z\leq a$, since $U''(y_z;\e)\geq 0$ ($\cup$-convexity), $U'(y_z;\e)(y_z-y_{z-1})\geq U(y_z;\e)-U(y_{z-1};\e)$. Thus,
\begin{equation*}
\sum_{z=1}^{a} \vert U'(y_z;\e)\vert (y_z-y_{z-1})\geq U(y_a;\e).
\end{equation*}

\textbf{ii)} Since the potential functions becomes $\cap$-convex for ${a+1}\leq z\leq b$,
\begin{equation*} 
U(y_z;\e)-U(y_{z-1};\e)\leq
U'(y_{z-1};\e)(y_z-y_{z-1}).
\end{equation*}
According to mean value theorem, $\exists k_z\in[y_{z-1},y_z]$ such that
$$U'(y_z;\e) = U'(y_{z-1};\e) + U''(k_z;\e) (y_z-y_{z-1}).$$

Let $D=\max_{x\in[0,x_s]}\vert U''(x;\e)\vert$. Then,
%\begin{alignat*}{4}
%U'(y_z;\e)(y_z-y_{z-1}) &\geq U'(y_{z-1};\e)(y_z&-{}&y_{z-1}) \\
%& &-{}&D (y_z-y_{z-1})^2\\
%&\geq U(y_z;\e)-U(&{}&y_{z-1};\e)\\
%& &-{}&D \frac{x_s}{w}(y_z-y_{z-1}).
%\end{alignat*}
\begin{align*}
U'(y_z;\e)(y_z\!-\!y_{z-1}) &\!\geq\! U'(y_{z-1};\e)(y_z\!-\!y_{z-1}) \!-\! D (y_z\!-\!y_{z-1})^2\\
&\begin{multlined}[b][0.65\columnwidth]
\!\geq\! U(y_z;\e)-U(y_{z-1};\e) \\-D \frac{x_s}{w}(y_z-y_{z-1}).
\end{multlined}
\end{align*}
The last inequality is due to $y_z-y_{z-1}\leq x_s/w$ which can be
shown by writing the DE equation for $y_z$ and $y_{z-1}$ and then,
since $y_z-y_{z-1}$ is a telescoping sum, the inequality follows. Hence,
\begin{align*}
\sum_{z=a+1}^{b} \vert U'(y_z;\e)\vert (y_z-y_{z-1})\geq 
U(y_b;\e)&-U(y_{a};\e) \\
&-D \frac{x_s}{w}(y_b-y_{a})
\end{align*}

\textbf{iii)} Similar to i) we now use the $\cap$-convexity property. For $b+1<z\leq c$, $U'(y_z;\e)(y_z-y_{z-1})\leq U(y_z;\e)-U(y_{z-1};\e)$
since $U''(y_z;\e)\leq 0$ and then,
\begin{align*}
\sum_{z=b+1}^{c} \vert U'(y_z;\e)\vert (y_z-y_{z-1})
&=-\sum_{z=b+1}^{c} U'(y_z;\e) (y_z-y_{z-1})\\
&\geq U(y_{b};\e)- U(y_c;\e).
\end{align*}

\textbf{iv)} For $z>c$, $U''(y_z;\e)\geq 0$, which implies 
$$U'(y_{z-1};\e)(y_z-y_{z-1})\leq U(y_z;\e)-U(y_{z-1};\e).$$
By the mean value theorem, we get
\begin{equation*}
U'(y_z;\e) \leq U'(y_{z-1};\e) + D (y_z-y_{z-1}),
\end{equation*}
and consequently,
\begin{equation*}
U'(y_z;\e) (y_z-y_{z-1})\leq U(y_z;\e)-U(y_{z-1};\e) + D\frac{x_s}{w} (y_z-y_{z-1}),
\end{equation*}
and,
\begin{align*}
\sum_{z=c+1}^{N'} \vert U'(y_z;\e)\vert (y_z-y_{z-1})
&=-\sum_{z=c+1}^{N'} U'(y_z;\e) (y_z-y_{z-1})\\
&\begin{multlined}[b][0.55\columnwidth]
\geq U(y_{c};\e)- U(y_{N'};\e)\\
-D\frac{x_s}{w}(y_{N'}-y_c),
\end{multlined}
\end{align*}
where $y_{N'}=x_{s}$. By summing up all terms,
\begin{multline*}
\sum_{z=1}^{N'} \vert U'(y_z;\e)\vert (y_z-y_{z-1})\geq
2U(y_b;\e)-U(x_s;\e)\\
 - D\frac{x_s}{w}(x_s-y_c + y_b-y_a).
\end{multline*}
by definition $y_b\leq x_u<y_{b+1}$. Since $U'(x_u;\e)=0$, there exists
$k\in[y_b,x_u]$ such that,
\begin{align*}
U(y_b;\e) &= U(x_u;\e) +\frac{U''(k;\e)}{2}(x_u-y_b)^2\\
&\geq U(x_u;\e) -\frac{D}{2}(x_u-y_b)^2\\
&\geq U(x_u;\e) -\frac{D}{2}\frac{x_s}{w}(x_u-y_b).
\end{align*}
The last inequality holds since $x_u-y_b\leq y_{b+1}-y_{b}\leq x_s/w$.
Finally,
\begin{align*}
\sum_{z=1}^{N'} \vert U'(y_z;\e)\vert (y_z-y_{z-1})\geq&
2U(x_u;\e)-U(x_s;\e)\\
& - D\frac{x_s}{w}(x_s-y_c + x_u-y_a)\\
\geq& 2U(x_u;\e)-U(x_s;\e) - D\frac{x_s^2}{w}
 .
\end{align*}
We have the result by noting that
\begin{equation*}
\sum_{z=1}^{N'} \rho'(1-x_z^{(t)})(x_z^{(t)}-x_{z-1}^{(t)})^2
\geq \frac{1}{w} 
\sum_{z=1}^{N'} \vert U'(y_z;\e)\vert (y_z-y_{z-1})\,.
\end{equation*}

\end{appendices}

\section*{Acknowledgment}
We thank R\"udiger Urbanke and Nicolas Macris for insightful discussions. Vahid Aref was
supported by grant No. 200021-125347 of the Swiss National Science
Foundation.
Laurent Schmalen was supported by the German Government in the frame of the CELTIC+/BMBF project SASER-SaveNet.

\footnotesize
\bibliographystyle{IEEEtran}
\bibliography{references}
% that's all folks
\end{document}